\newtheorem{theorem}{Theorem}
\newtheorem{fact}{Fact}
\newtheorem{corollary}{Corollary}
\newtheorem{definition}{Definition}
\newtheorem{example}{Example}
\newtheorem{lemma}{Lemma}
\newtheorem{remark}{Remark}
\newenvironment{proof}[1][Proof]{\emph{#1.} }{\  \hfill $\square $ \vspace{5 pt}}
\tikzset{myptr/.style={decoration={markings,mark=at position 1 with %
       {\arrow[scale=2,>=stealth]{>}}},postaction={decorate}}}
\DeclareFontFamily{T1}{calligra}{}
\DeclareFontShape{T1}{calligra}{m}{n}{<->s*[1.44]callig15}{}
\DeclareMathAlphabet\mathcalligra   {T1}{calligra} {m} {n}
\newcommand{\pablo}[1]{  \ifthenelse{\boolean{showcomments}}
{\textcolor{green!50!black}{(T: #1)}}{}}
\newcommand{\marcelo}[1]{\ifthenelse{\boolean{showcomments}}
{\textcolor{red}{(M: #1)}}{}}
\newcommand{\agustin}[1]{  \ifthenelse{\boolean{showcomments}}
{\textcolor{blue!50!black}{(T: #1)}}{}}
\begin{document}

\title{Pareto-undominated 
strategy-proof rules in economies with multidimensional single-peaked preferences\footnote{A previous version of this paper was circulated under the name ``Pareto dominant strategy-proof rules with multiple commmodities''. Part of this work was written during my visit at the Department of Economics, University of Rochester, under a Fulbright scholarship. I thank William Thomson for his hospitality. I also thank Federico Fioravanti, Jordi Massó, and Alejandro Neme for their detailed comments.  
I acknowledge financial support from Universidad Nacional de San Luis  through grants 032016 and 030320, from Consejo Nacional
de Investigaciones Cient\'{\i}ficas y T\'{e}cnicas (CONICET) through grant
PIP 112-200801-00655, and from Agencia Nacional de Promoción Cient\'ifica y Tecnológica through grant PICT 2017-2355.}}

\author{Agustín G. Bonifacio\footnote{Instituto de Matemática Aplicada San Luis (UNSL-CONICET), Av. Italia 1556,  5700, San Luis, Argentina. E-mail: \href{mailto:agustinbonifacio@gmail.com}{ \texttt{abonifacio@unsl.edu.ar}}} \hspace{1.5 pt} \orcidlink{0000-0003-2239-8673}}

\date{\today}

\maketitle

\begin{abstract}

In the problem of fully allocating a social endowment of perfectly divisible commodities among a group of agents with multidimensional single-peaked preferences, we study strategy-proof rules that are not Pareto-dominated by other strategy-proof rules. Specifically, we: (i) establish a sufficient condition for a rule to be \emph{Pareto-undomina\-ted strategy-proof};  (ii) introduce a broad class of rules satisfying this property by extending the family of ``sequential allotment rules'' to the multidimensional setting; and (iii) provide a new characterization of the ``multidimensional uniform rule'' involving \emph{Pareto-undominated strategy-proofness}. Results (i) and (iii) generalize previous findings that were only applicable to the two-agent case.

\vspace{5 pt}

\noindent
\emph{Journal of Economic Literature}  Classification Numbers: D63, D71, D78 

\noindent 
\emph{Keywords:} multidimensional single-peakedness; Pareto-undominated strategy-proofness; multidimensional sequential rule; multidimensional uniform rule.
\end{abstract}

\vspace{10 pt}

\section{Introduction}

Consider the problem of fully allocating a social endowment of perfectly divisible commodities among a group of agents. An allocation rule specifies an allocation for each such problem. Since the pioneering work of \cite{Hurwicz72}, research on allocation rules in economic environments has highlighted a fundamental incompatibility: the standard notion of efficiency conflicts with the strong incentive-compatibility property of ``strategy-proofness''---by which no agent can benefit from misrepresenting her preferences---when fairness concerns are also taken into account.

For instance, in two-agent economies, any strategy-proof and efficient rule must be ``dictatorial'' (i.e., one agent always receives her most preferred allotment) under the following preference conditions: (i) homothetic or linear preferences \citep{Schummer97}, (ii) preferences with constant elasticity of substitution \citep{Ju03}, or (iii) quasi-linear preferences \citep{Goswami14}.\footnote{An important exception is the domain of preferences representable by Leontief-type utility functions, in which there are strategy-proof and efficient rules that are not dictatorial \citep[see][]{Nicolo04,Li13}.} Moreover, \cite{cho2023strategy} demonstrate that, in economies with more than two agents and linear preferences, no rule can simultaneously satisfy strategy-proofness, efficiency, and ``equal treatment'' (i.e., agents with identical preferences must receive indifferent allotments).

In this paper, we address this incompatibility in the context of economies where preferences are multidimensionally single-peaked. These preferences satisfy the following condition: for each commodity, holding the consumption levels of other commodities fixed, an agent’s welfare increases up to a critical consumption level, beyond which further increases reduce her welfare. The multidimensional single-peaked domain is important for two reasons: (i) under regularity conditions (such as continuity and strict convexity), it generalizes the classical domain of homothetic preferences when the consumption set is compact, and (ii) it accommodates various interpretations across different applications.\footnote{For one application, consider an exchange economy with classical preferences but assume that prices, somehow, are  kept fixed and a  rationing scheme is needed. If the preferences of the agents are strictly convex, their restriction to the boundary of their budget set will be multidimensional single-peaked.  Another application consists of imagining a group of partially altruist agents who have to divide among them a  bundle of goods. Suppose agents pay more attention to themselves when they consume little and give more priority to the consumption of the rest as their own consumption increases. In that case, their  preferences 
are also multidimensional single-peaked  \citep[see][Chapter 11]{Thomson13}.}

When there is only one commodity to be fully allocated, economies with single-peaked preferences allow for non-trivial, strategy-proof, and efficient rules. Among these, the ``uniform rule'' \citep{Sprumont91} stands out due to its desirable properties: it is the only rule that satisfies strategy-proofness, efficiency, and equal treatment \citep{Ching94}. Furthermore, \cite{barbera1997strategy} study the family of ``sequential rules'', which allow for asymmetric treatment of agents, and characterize them as the only strategy-proof and efficient rules that also satisfy ``replacement monotonicity''. This property states that if a change in an agent’s preference does not decrease (or increase) the amount assigned to that agent, the amounts assigned to the other agents cannot increase (or decrease).

However, in economies with multidimensional single-peaked preferences, this compatibility breaks down. \cite{amoros2002single} shows that in this setting, any rule satisfying both strategy-proofness and efficiency must be dictatorial, even with only two agents. Given that our primary focus is the study of strategy-proof rules, we adopt a weaker notion of efficiency.\footnote{Another approach to addressing the impossibility is to retain efficiency while weakening strategy-proofness. A recent trend in the literature examines rules that allow for some degree of manipulability, but not in an ``obvious'' way, following the insight of \cite{troyan2020obvious}. An application of these ideas to the model considered here, specifically in the one-commodity case, is presented in \cite{arribillaga2025not}.} To address this, we focus on rules that satisfy the informational simplicity condition of being ``own-peak-only''---meaning that the only information the rule uses from an agent's preference to determine her allocation is her peak consumption bundle. Within the class of strategy-proof and own-peak-only rules, we define a notion of Pareto domination and, following \cite{anno2013second}, identify those rules that are not strictly dominated by others. We refer to these as ``Pareto-undominated strategy-proof'' rules. One of the main results of \cite{anno2013second} states that, in two-agent economies, the multidimensional version of the uniform rule \citep[studied by][among others]{amoros2002single,Morimoto13} is the only Pareto-undominated strategy-proof rule that also satisfies equal treatment.

Our aim is to generalize this characterization of the multidimensional uniform rule to economies with any number of agents. To achieve this, we extend the notion of replacement monotonicity from \cite{barbera1997strategy} to multidimensional settings. A rule satisfies ``multidimensional replacement monotonicity'' if, when an agent’s preference changes, the resulting change in the allocation \emph{for each commodity} moves in the opposite direction for the other agents relative to the change in the agent’s own allocation. Our main result (Theorem \ref{caractvariosbienes}) characterizes the multidimensional uniform rule as the only rule that satisfies Pareto-undominated strategy-proofness, equal treatment, and multidimensional replacement monotonicity.

To derive this result, we first establish a sufficient condition for a rule to be (strongly) Pareto-undominated strategy-proof (Theorem \ref{main}). This condition involves combining the properties of  strategy-proofness, own-peak-onliness, and multidimensional replacement monotonicity with the mild property of ``unanimity''. Unanimity requires that if the sum of the agents’ peak amounts for each commodity equals the total supply, then each agent’s assignment must match her peak vector. As a by-product, we also show that the multidimensional version of the sequential rules introduced by \cite{barbera1997strategy} are strongly Pareto-undominated strategy-proof (Theorem \ref{theo sequential}).

The paper is organized as follows. In Section \ref{prelimvarios} we introduce the model, the properties we will rely on, and a general impossibility theorem. The property of Pareto-undominated strategy-proofness is introduced together with a sufficient condition to achieve it in Section \ref{EffNoMan}. The multidimensional sequential rules and the multidimensional uniform rule are analyzed in Sections \ref{SecuencialesVariosBienes} and \ref{UniformeVariosBienes}, respectively. Finally, some comments are gathered in Section  \ref{ComentUnifVarios}. Most proofs are relegated to Appendix \ref{appendix}.

\section{Model, properties and preliminary results}\label{prelimvarios}

Let  $L \equiv \{1, \dots, l\}$ be the set of goods.  For each  $\ell \in L,$ there is an amount  $\Omega^\ell \in \mathbb{R}_{+}$ to be allocated. Let  $\Omega \equiv (\Omega^1, \dots, \Omega^l) \in \mathbb{R}_{+}^{l}$ be the social endowment and $X \equiv \prod_{\ell \in  L} [0, \Omega^\ell]$ the consumption set. The set of agents is $N \equiv \{1, \ldots, n\}$. Each agent $i \in N$ has a preference relation $R_i$  defined on $X$ which is complete, transitive, continuous, and strictly convex. Denote the strict and indifference relations associated to $R_i$ by  $P_i$ and  $I_i,$ respectively. The most preferred consumption in $X$ according to $R_i$ (whenever it exists) is called the \emph{peak} of $R_i$ and is denoted by $p(R_i)$, and its $\ell$-component by $p^\ell(R_i)$, for each $\ell \in L$. In general, we denote a domain of preferences by $\mathcal{R}.$  

A profile of preferences is a list  $R \equiv (R_1, \ldots, R_n) \in \mathcal{R}^{n}$, and its related profile of peaks (whenever it exists) is denoted by $p(R)$. Given a profile $R \in \mathcal{R}^{n}$ and an agent $i \in N,$ the notation $R_{-i}$ refers to the list of preferences of all the agents except agent $i.$  Sometimes, we will use a similar notation replacing $i$ with a set of agents $S \subseteq N.$ 

An  \emph{economy} is a pair   $(R, \Omega) \in \mathcal{R}^{n} \times \mathbb{R}_{+}^{l}.$    A \emph{(feasible) allocation} is a list $x \equiv (x_1, \dots, x_n) \in X^{n}$ such that $\sum_{i \in N} x_i=\Omega.$ Let $Z$ denote the set of feasible allocations. An \emph{(allocation) rule} on $\mathcal{R}$, denoted by $\varphi,$ is a function from   $\mathcal{R}^{n}$ to  $Z.$ Throughout this paper we will kept fixed the social endowment  $\Omega \in \mathbb{R}^{l}_+.$ Therefore, an economy will simply be represented by a profile of preferences $R \in \mathcal{R}^{n}.$ 

Let $\widehat{x}, \overline{x}, \widetilde{x} \in X.$ We say that \emph{$\overline{x}$ is between $\widehat{x}$ and $\widetilde{x}$} if for each $\ell \in L,$ $\widehat{x}^\ell \leq  \overline{x}^\ell \leq \widetilde{x}^\ell$  or $\widehat{x}^\ell \geq  \overline{x}^\ell \geq \widetilde{x}^\ell$. To define the class of preferences we will study, fix an agent $i \in N.$

\begin{definition}\label{defunimul}
 A preference  $R_i$ is   \textbf{multidimensional single-peaked} if there is a peak vector $p(R_i) \in X$ such that, for each pair  $x_i, \widetilde{x}_i \in X$ with  $x_i \neq \widetilde{x}_i,$ whenever 
 $x_i$ is between $p(R_i)$ and $\widetilde{x}_i$,  we have $x_i \ P_i \ \widetilde{x}_i.$
\end{definition}
From now on, $\mathcal{R}$ stands for the domain of all multidimensional single-peaked preferences.

Next,  we present the properties of allocation rules that we will consider. The first limits the strategic behavior of the agents requiring that none of them can manipulate the rule by declaring false preferences:

\vspace{5 pt}
\noindent \textbf{Strategy-proofness:} For each  $R \in \mathcal{R}^{n},$ each $i \in N,$ and each  $R_i' \in \mathcal{R},$  we have 
$$\varphi_i(R) \ R_i \ \varphi_i(R_i',R_{-i}).$$

The second property requires that, for each economy, a rule always selects an allocation such that no other allocation Pareto dominates it.

\vspace{5 pt}
\noindent
\textbf{Efficiency:} For each  $R \in \mathcal{R}^n,$ there is no $x' \in Z$ such that  $x_i' \ R_i \ \varphi_i(R)$ for each $i \in N$ and  $x_j' \ P_j \  \varphi_j(R)$ for some  $j \in N$.
\vspace{5 pt}

The following fairness property establishes that two agents with the same preferences must receive indifferent allocations.

\vspace{5 pt}
\noindent
\textbf{Equal treatment:} For each $R \in \mathcal{R}^{n}$ and each $\{i,j\} \subseteq N,$  $R_i=R_j$ implies  $\varphi_i(R) \ I_i \  \varphi_j(R).$
\vspace{5 pt}

In the domain of classical, homothetic, and smooth preferences, \cite{Serizawa02} shows that the three previously defined properties are incompatible. Let $\mathcal{R}_{cl}$ denote the domain of classical, homothetic and smooth preferences.\footnote{A preference relation $R_i$ is \textbf{homothetic} if $x_i R_i y_i$ implies $\lambda x_i R_i \lambda y_i$ for each $\lambda \in \mathbb{R}_+,$ and is \textbf{smooth} if, for each $x_i \in X\setminus \partial X,$ there is a unique  vector in the simplex $\{q \in \mathbb{R}^{l}_+ : ||q||=1\}$ which supports the upper contour set $\{y_i \in X : y_i R_i x_i\}.$}

\begin{fact}\label{fact 1}{\normalfont{\citep{Serizawa02}}}\label{Seri} No rule defined on $\mathcal{R}_{cl}$ satisfies  \emph{strategy-proofness, efficiency} and \emph{equal treatment.}
\end{fact}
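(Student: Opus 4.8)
The plan is to derive a contradiction from the assumption that some rule $\varphi$ on $\mathcal{R}_{cl}$ is simultaneously \emph{strategy-proof}, \emph{efficient}, and satisfies \emph{equal treatment}. Since this is the result of \cite{Serizawa02}, one may of course simply invoke it; but a self-contained route is to combine a symmetric ``anchor'' (which pins down the allocation at profiles where all agents declare the same preference) with the two-agent dictatorship phenomenon for homothetic domains already recorded in the Introduction \citep{Schummer97}, and then to propagate the conclusion to arbitrary $n$ using strategy-proofness.

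First I would establish the anchor. Fix any $R_0 \in \mathcal{R}_{cl}$ and consider the identical profile $(R_0,\dots,R_0)$. Because $R_0$ is strictly convex, strictly monotone, and homothetic, every efficient allocation must assign to each agent a bundle lying on a common ray through the origin: efficiency equalizes the supporting prices, which are unique by smoothness, and for a homothetic preference equal supporting prices force proportional bundles. As the bundles sum to $\Omega$, each agent receives $t_i\,\Omega$ with $t_i \geq 0$ and $\sum_{i \in N} t_i = 1$. Equal treatment then requires $t_i\,\Omega \ I_0 \ t_j\,\Omega$ for all $i,j$, and strict monotonicity together with homotheticity makes this possible only when $t_i = t_j$. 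Hence $\varphi_i(R_0,\dots,R_0) = \Omega/n$ for every $i$; in particular, at a symmetric profile no agent receives her most preferred feasible bundle.

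Next I would use strategy-proofness to convert this anchor into a contradiction. For $n=2$ the argument is immediate: by the dictatorship result for two-agent homothetic economies cited in the Introduction, strategy-proofness and efficiency force one agent, say $1$, always to obtain her $R_1$-best feasible bundle, i.e.\ $\varphi_1(R) = \Omega$ and $\varphi_2(R) = 0$ for all $R$; evaluating at $(R_0,R_0)$ yields $\Omega \ P_0 \ 0$, which contradicts the anchor (and equal treatment directly). For general $n$, I would fix $R_{-1}$ at the common preference and analyze the option set $O_1 = \{\varphi_1(R_1',R_{-1}) : R_1' \in \mathcal{R}_{cl}\}$; strategy-proofness says $\varphi_1(R)$ is the $R_1$-maximal element of $O_1$, while efficiency and smoothness confine $O_1$ to allocations with equalized supporting prices. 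The task is then to show that the incentive and price constraints are incompatible with the equal-division anchor unless some agent is effectively a dictator, which equal treatment forbids.

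The main obstacle is precisely this last propagation step. Unlike the one-commodity single-peaked model, where the efficient set is a one-dimensional interval and strategy-proofness pins allotments down sharply, here the Pareto set is a full contract curve in $\mathbb{R}^{l}$, so a single agent's deviations can move the allocation along many directions and strategy-proofness alone does not determine $\varphi$. Turning the incentive constraints into a usable price-monotonicity statement requires heavy use of homotheticity (indifference surfaces are radial dilations of one another) and of smoothness (supporting prices are unique), and it is in organizing this geometry for arbitrary $n$ that the real work of \cite{Serizawa02} lies.
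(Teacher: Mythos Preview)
The paper does not prove this statement at all: Fact~\ref{fact 1} is stated with a citation to \cite{Serizawa02} and is then used as a black box in the proof of Lemma~\ref{lemma impossibility}. So there is no ``paper's own proof'' to compare against, and your opening remark that one may simply invoke the cited result is exactly what the paper does.

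Your attempt to go further and sketch a self-contained argument is partially successful but, as you yourself flag, incomplete. The anchor step (equal division at symmetric profiles) is correct for strictly convex, smooth, homothetic preferences, and your $n=2$ case is fine---though note that it merely trades one citation (\cite{Serizawa02}) for another (\cite{Schummer97}), so it is not truly self-contained. For $n\geq 3$ you do not actually carry out the argument: you describe the option-set strategy, identify the obstacle, and then defer back to \cite{Serizawa02}. That is an honest assessment, but it means your proposal does not constitute a proof for the general statement; the propagation from the two-agent dictatorship to arbitrary $n$ is precisely the nontrivial content of Serizawa's paper, and nothing in your sketch replaces it.

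In short: the paper's treatment is a bare citation, your proposal matches that and adds a correct $n=2$ argument plus an explicitly unfinished outline for general $n$. If you want a genuine alternative proof you would need to complete the propagation step; otherwise, citing \cite{Serizawa02} is both what the paper does and all that is required here.
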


Our first result shows that the impossibility of Fact \ref{fact 1}  extends to the domain of all multidimensional single-peaked preferences.

\begin{lemma}\label{lemma impossibility} No rule defined on $\mathcal{R}$ satisfies  \emph{strategy-proofness, efficiency} and \emph{equal treatment.}
\end{lemma}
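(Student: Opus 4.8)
The plan is to \textbf{reduce the multidimensional impossibility to the already-established classical impossibility} of Fact \ref{fact 1}, rather than reproving an impossibility from scratch. The key observation is that $\mathcal{R}_{cl}$, the domain of classical, homothetic, and smooth preferences, is \emph{not} a subset of $\mathcal{R}$, so we cannot directly restrict a rule on $\mathcal{R}$ to $\mathcal{R}_{cl}$. However, the introduction already hints at the right bridge: under regularity conditions, the restriction of a classical preference to the boundary of a budget set is multidimensional single-peaked, and more generally a homothetic preference on the compact consumption set $X$ has a well-defined peak and satisfies the betweenness condition of Definition \ref{defunimul}. So the natural approach is to exhibit an \emph{embedding} $\iota : \mathcal{R}_{cl} \to \mathcal{R}$ that assigns to each classical-homothetic-smooth preference $R_i$ a multidimensional single-peaked preference $\iota(R_i)$ preserving enough structure—most importantly, the peak and the indifference comparisons needed for the three axioms.

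The steps I would carry out are as follows. First I would argue that any $R_i \in \mathcal{R}_{cl}$, when viewed on the compact set $X$, already possesses a peak $p(R_i) \in X$ (by continuity and compactness) and, by strict convexity together with homotheticity, satisfies the single-peaked betweenness property—so that in fact $\mathcal{R}_{cl} \subseteq \mathcal{R}$ as a set of preferences on $X$, possibly after noting that the peak lies on the boundary $\partial X$. This containment is the crux: if it holds cleanly, the embedding $\iota$ is just the identity, and the reduction is immediate. Second, assuming toward a contradiction that some rule $\varphi$ on $\mathcal{R}$ satisfies strategy-proofness, efficiency, and equal treatment, I would restrict $\varphi$ to profiles in $\mathcal{R}_{cl}^n \subseteq \mathcal{R}^n$. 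Third, I would verify that this restriction $\varphi|_{\mathcal{R}_{cl}^n}$ inherits all three properties: strategy-proofness and equal treatment are preserved because the quantifiers in their definitions range over a \emph{smaller} set of deviating/matching preferences (so they become weaker, hence still satisfied), while efficiency is preserved because the set $Z$ of feasible allocations and the Pareto-domination relation depend only on the preferences in the profile, which are unchanged. Fourth, this restricted rule would contradict Fact \ref{fact 1}, completing the proof.

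The main obstacle I anticipate is precisely the containment $\mathcal{R}_{cl} \subseteq \mathcal{R}$, which requires care on two fronts. One subtlety is whether a homothetic preference on $X$ genuinely has its most-preferred point \emph{in} $X$: homothetic preferences typically have ``more is better'' tendencies, so the peak may sit at a corner of $X$ (e.g.\ at $\Omega$ itself), and I must confirm that a boundary peak is compatible with Definition \ref{defunimul}—it is, since the definition only requires betweenness comparisons relative to whatever peak exists. The deeper subtlety is verifying the strict betweenness inequality $x_i \, P_i \, \widetilde{x}_i$ for \emph{all} pairs where $x_i$ lies between $p(R_i)$ and $\widetilde{x}_i$ coordinatewise; for a homothetic, strictly convex preference this should follow from the fact that the upper contour sets are strictly convex and nested radially, but making the coordinatewise ``between'' condition interact correctly with the radial/convex structure is where the real work lies. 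If the clean identity embedding fails for some pathological homothetic preferences, the fallback is to restrict attention to a subfamily of $\mathcal{R}_{cl}$ that is visibly single-peaked (for instance, preferences whose peak is interior, obtained by a suitable normalization), and check that Fact \ref{fact 1} still applies to that subfamily—since the Serizawa impossibility is itself proved by exhibiting specific profiles, it suffices that those profiles survive the restriction.
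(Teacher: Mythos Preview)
Your approach is essentially the paper's: restrict a hypothetical rule on $\mathcal{R}$ to $\mathcal{R}_{cl}$ and invoke Fact~\ref{fact 1}. The paper's proof is a three-line argument that simply asserts $\mathcal{R}_{cl} \subseteq \mathcal{R}$ and observes that strategy-proofness, efficiency, and equal treatment are all inherited under restriction---exactly your Steps~2--4.

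One point worth flagging: your opening assertion that $\mathcal{R}_{cl}$ ``is \emph{not} a subset of $\mathcal{R}$'' is the opposite of what you end up arguing (and of what the paper uses). You correctly work your way back to the containment, but the detour through an abstract embedding $\iota$ is unnecessary once you recognize---as you do---that on the compact box $X$ a classical, strictly convex, homothetic preference has its unique maximizer at $\Omega$ and that the coordinatewise betweenness condition of Definition~\ref{defunimul} then reduces to strict monotonicity. The paper treats this containment as immediate; your discussion of why it holds is not wrong, just more cautious than needed.
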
 
\begin{proof}
This is a straightforward consequence of the fact that $\mathcal{R}_{cl} \subseteq \mathcal{R}.$ Let  $\varphi$ be a \emph{strategy-proof,} \emph{efficient} and  \emph{equally-treating} rule defined on  $\mathcal{R}.$  Let $\hat{\varphi}$  be the restriction of  $\varphi$ to  $\mathcal{R}_{cl}.$ Then,  $\hat{\varphi}$ is  \emph{strategy-proof,} \emph{efficient} and \emph{equally-treating} on $\mathcal{R}_{cl}.$ This contradicts Fact \ref{Seri}.
\end{proof}

As the combination of \emph{strategy-proofness} and \emph{efficiency} is incompatible with \emph{equal treatment,} and since we are interested in studying  \emph{strategy-proof} rules, we have to weaken \emph{efficiency.} The following property does that (and is equivalent to \emph{efficiency} in one-commodity economies): 

\vspace{5 pt}
\noindent
\textbf{Same-sidedness:}  For each $R \in \mathcal{R}^n$  and each $\ell \in L,$
\begin{enumerate}[(i)]
    \item $\sum_{i \in N} p^\ell(R_i) \geq \Omega^\ell$ implies   $\varphi_i^\ell(R) \leq p^\ell(R_{i})$ for each $ i \in N,$ and

    \item $\sum_{i \in N} p^\ell(R_{i}) \leq \Omega^\ell$ implies $\varphi_i^\ell(R) \geq p^\ell(R_{i})$ for each $ i \in N.$ 
\end{enumerate}
\vspace{5 pt}

\noindent Of course, every  \emph{efficient} rule is \emph{same-sided,} but the converse is false.\footnote{Define a rule in the following way: for each commodity, if the sum of the peaks is greater than zero, the rule assigns proportionally with respect to the peaks; otherwise the rule assigns the egalitarian allocation. This rule is \emph{same-sided.} Consider a two-good, two-agent economy such that the peak amount of both agents equals the social endowment and there is a feasible allocation that dominates the egalitarian allocation. In this economy, the rule selects the egalitarian allocation. However, such an allocation is not \emph{efficient}.}
 \emph{Same-sidedness} was used by  \cite{amoros2002single} to characterize the multidimensional version of the uniform rule together with \emph{strategy-proofness} and the property of  \emph{equal treatment}.
 
An even weaker property than \emph{same-sidedness} says that when, for each good, the sum of the peaks of the agents is equal to the social endowment, then the rule must assign, for each agent and each good, that peak amount.

\vspace{5 pt}
\noindent
\textbf{Unanimity:} For each  $ R \in \mathcal{R}^n,$ if  $\sum_{i \in N} p^\ell(R_{i}) = \Omega^\ell$ for each $\ell \in L,$ then  $\varphi_i^\ell(R)= p^\ell(R_i)$ for each $ i \in N$ and each  $\ell \in L.$
\vspace{5 pt}
\\
Clearly, every \emph{same-sided} rule is \emph{unanimous,} but the converse is false.\footnote{Think of a rule which assigns the peaks when feasible but is constant otherwise.} This property can be consider as a minimal requirement of efficiency.

Finally, we present two further properties that will be fundamental for the results of this paper. The first one establishes that, for each commodity, if the change in the preferences of one agent does not decrease (increase)  the amount assigned to that agent, then the amounts assigned to the rest of the agents cannot increase (decrease).

\vspace{5 pt}
\noindent
\textbf{Multidimensional replacement monotonicity:} For each $R \in \mathcal{R}^{n},$ each $i \in N,$ each $ R_{i}' \in  \mathcal{R},$ and each $\ell \in L,$
\begin{enumerate}[(i)]
    \item $\varphi_i^\ell(R)\leq \varphi_i^\ell(R_{i}',R_{-i})$ implies $\varphi_j^\ell(R)\geq \varphi_j^\ell(R_{i}',R_{-i})$ for each $j \in N\setminus \{i\},$ and

    \item $\varphi_i^\ell(R)\geq \varphi_i^\ell(R_{i}',R_{-i})$ implies $\varphi_j^\ell(R)\leq \varphi_j^\ell(R_{i}',R_{-i})$ for each $j \in N\setminus \{i\}.$ 
\end{enumerate}
\vspace{5 pt}

When there is only one commodity in the economy and \emph{efficiency} is also invoked, \emph{replacement monotonicity} can be interpreted as a solidarity principle: any (non-disruptive) change in an agent’s preference must affect all remaining agents’ welfare in the same direction. While this interpretation is compelling, its primary justification in models with multiple commodities is technical: it provides structure to otherwise complex classes of rules, such as strategy-proof rules, which are challenging to characterize.

The second property, introduced by \cite{Satterthwaite81}, has played an important role in the development of the axiomatic study of resource allocation. It establishes that, if the change in the preference of one agent does not change the amount assigned to that agent, then nobody else's assignment should change either.

\vspace{5 pt}
\noindent
\textbf{Non-bossiness:}  For each $R \in \mathcal{R}^{n},$ each  $i \in N,$  and each $ R_{i}' \in  \mathcal{R},$  $$\varphi_{i}(R)=\varphi_{i}(R_{i}',R_{-i}) \text{ \ implies \ }\varphi(R)=\varphi(R_{i}',R_{-i}).$$
\begin{remark}\label{remnonbossy}\emph{Every \emph{multidimensional replacement monotonic} rule is  \emph{non-bossy}. When there are only two agents, any rule satisfies trivially \emph{multidimensional replacement monotonicity} (and therefore \emph{non-bossiness}).}
\end{remark}
Several different interpretations have been given to  \emph{non-bossiness.} From a strategic perspective, this property strengthens in some models 
\emph{strategy-proofness} to  \emph{group strategy-proofness}.\footnote{\emph{Strategy-proofness} and \emph{non-bossiness} imply \emph{group strategy-proofness,} for example, in one-commodity economies with single-peaked preferences. However, when there are several commodities, this no longer holds. The multidimensional uniform rule, which we present in Section \ref{UniformeVariosBienes}, is not \emph{group strategy-proof} (see Example \ref{ej}).  For a thorough account on the use of  \emph{non-bossiness} in the literature, see \cite{thomson2016non}.} From a normative perspective, the property has the advantage of dismissing rules with certain ``arbitrary'' behavior.

\section{Pareto-undominated strategy-proof rules}\label{EffNoMan}

Here we present another way to weaken \emph{efficiency} within the class of \emph{strategy-proof} rules. The idea is to select those \emph{strategy-proof} rules which are undominated in the Pareto sense. Next, we present the formal definitions. Denote by $\Phi$ the class of all rules defined on $\mathcal{R}$. 

\begin{definition}
Let  $\varphi$ and $\psi$ be two rules in $\Phi$. We say that \textbf{$\boldsymbol{\varphi$ dominates $\psi}$}, and write  $\boldsymbol{\varphi \succcurlyeq  \psi}$, if and only if $\varphi_i(R) \ R_i \ \psi_i(R)$ for each $R \in \mathcal{R}^{n}$ and each $i \in N.$  
\end{definition}

\begin{remark}\emph{Relation  $\succcurlyeq$ is a preorder, i.e. it is reflexive and transitive, but not necessarily complete nor antisymmetric.\footnote{A binary relation  $\succeq$ defined on a set $X$ is \textbf{reflexive} if $x \succeq x$ for each $x \in X$; it is  \textbf{transitive} if $x \succeq y$ and $y \succeq z$ imply $x \succeq z$ for each  $\{x, y , z\} \subseteq X$;  it is  \textbf{complete} if $x \succeq y$ or $y \succeq x$ for each $\{x,y\} \subseteq X$; and it is \textbf{antisymmetric}  if $x \succeq y$ and $y \succeq x$ imply $x=y$ for each $\{x,y\} \subseteq X$.}}
\end{remark}
The property of \emph{efficiency} can be characterized in terms of the relation $\succcurlyeq$ as follows:

\begin{lemma}\label{lemma eff}  A rule  $\varphi \in \Phi$ is  \emph{efficient} if and only if $\varphi \succcurlyeq \psi$ for each  $\psi \in \Phi$ such that  $\psi \succcurlyeq \varphi$. 
\end{lemma}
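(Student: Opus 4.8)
The plan is to prove the two directions of the biconditional separately, relating the abstract statement about the preorder $\succcurlyeq$ directly to the pointwise (economy-by-economy) definition of \emph{efficiency}. The key observation throughout is that, unwinding the definition, $\psi \succcurlyeq \varphi$ means $\psi_i(R)\ R_i\ \varphi_i(R)$ for \emph{every} economy $R$ and \emph{every} agent $i$, i.e. $\psi$ weakly Pareto-improves on $\varphi$ in each economy.

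For the ``only if'' direction, I would assume $\varphi$ is \emph{efficient} and take any $\psi \in \Phi$ with $\psi \succcurlyeq \varphi$; the goal is to show $\varphi \succcurlyeq \psi$. Fix an economy $R \in \mathcal{R}^n$. Since $\psi \succcurlyeq \varphi$ we have $\psi_i(R)\ R_i\ \varphi_i(R)$ for each $i \in N$, so the feasible allocation $\psi(R)$ weakly Pareto-dominates $\varphi(R)$. Now I would argue that efficiency of $\varphi$ forces this weak domination to be an indifference for every agent: if some agent $j$ had $\psi_j(R)\ P_j\ \varphi_j(R)$, then (together with $\psi_i(R)\ R_i\ \varphi_i(R)$ for the rest) the allocation $x' = \psi(R) \in Z$ would Pareto-dominate $\varphi(R)$, contradicting \emph{efficiency}. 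Hence $\varphi_i(R)\ I_i\ \psi_i(R)$ for each $i$, and in particular $\varphi_i(R)\ R_i\ \psi_i(R)$. As $R$ and $i$ were arbitrary, $\varphi \succcurlyeq \psi$.

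For the ``if'' direction I would prove the contrapositive: suppose $\varphi$ is \emph{not} efficient, and produce a $\psi \in \Phi$ with $\psi \succcurlyeq \varphi$ but $\varphi \not\succcurlyeq \psi$. By failure of efficiency there is an economy $R^\ast$ and a feasible allocation $x' \in Z$ with $x_i'\ R_i^\ast\ \varphi_i(R^\ast)$ for each $i$ and $x_j'\ P_j^\ast\ \varphi_j(R^\ast)$ for some $j$. Define $\psi$ to coincide with $\varphi$ on every economy except $R^\ast$, where I set $\psi(R^\ast) = x'$. Since $x' \in Z$, this $\psi$ is a well-defined rule in $\Phi$. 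It satisfies $\psi \succcurlyeq \varphi$: on economies other than $R^\ast$ the two rules agree (so $R_i$-indifference holds by reflexivity of preferences), and at $R^\ast$ we have $\psi_i(R^\ast) = x_i'\ R_i^\ast\ \varphi_i(R^\ast)$ for each $i$ by construction. But $\varphi \not\succcurlyeq \psi$, because at $R^\ast$ and agent $j$ we would need $\varphi_j(R^\ast)\ R_j^\ast\ \psi_j(R^\ast) = x_j'$, contradicting $x_j'\ P_j^\ast\ \varphi_j(R^\ast)$. This shows $\varphi$ is not among the $\succcurlyeq$-maximal elements over $\{\psi : \psi \succcurlyeq \varphi\}$, which is exactly the negation of the right-hand condition.

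The argument is essentially a routine translation between the local (per-economy) language of Pareto domination and the global (rule-level) preorder, so I do not anticipate a genuine obstacle; the only point requiring minor care is the ``if'' direction, where one must check that the modified rule $\psi$ is genuinely feasible (which holds since $x' \in Z$) and that the witness of strict domination survives — both are immediate from the definitions. A small stylistic choice is whether to phrase the second direction as a direct proof (assume the right-hand side, deduce efficiency) or via the contrapositive construction above; I find the contrapositive cleaner because it lets the failure of efficiency hand us the dominating allocation explicitly, which we then splice into $\varphi$ to build the dominating rule.
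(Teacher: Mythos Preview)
Your proposal is correct and follows essentially the same approach as the paper's proof: both directions are argued by contradiction, and in the ``if'' direction you construct the identical rule $\psi$ that agrees with $\varphi$ everywhere except at the single economy witnessing inefficiency. The only cosmetic difference is that in the ``only if'' direction the paper starts from $\varphi \not\succcurlyeq \psi$ to extract the witness $(R,i)$, whereas you fix $R$ first and then argue no strict improvement is possible; the logical content is the same.
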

\begin{proof} ($\Longrightarrow$) Let $\varphi \in \Phi$ be an \emph{efficient} rule and let $\psi \in \Phi$ be such that  $\psi \succcurlyeq \varphi$. Assume $\varphi \not\succcurlyeq \psi$. Then, there are   $R \in \mathcal{R}^{n}$  and $i \in N$ such that $\psi_i(R) \ P_i \  \varphi_i(R).$ Since $\psi_j(R) \ R_j \  \varphi_j(R)$  for each $j \in N\setminus\{i\},$ it follows $\varphi$ is not  \emph{efficient}, a contradiction. Hence, $\varphi \succcurlyeq \psi$.

\smallskip 

\noindent ($\Longleftarrow$)  Let $\varphi \in \Phi$. Assume that for each  $\psi \in \Phi$ such that  $\psi \succcurlyeq \varphi,$ we have $\varphi \succcurlyeq \psi$. If  $\varphi$ is not  \emph{efficient}, there are  $R' \in \mathcal{R}^{n}$ and  $x \in Z$ such that $x_j \ R_j' \  \varphi_j(R')$ for each  $j \in N$ and $x_i \  P_i' \  \varphi_i(R')$  for some $i \in N$. Define rule $\psi'$ as follows: for each $R \in \mathcal{R}^n,$
$$\psi'(R)\equiv\left\{ \begin{array}{l l}
 x & \text{ if } \  R=R' \\
\varphi(R) & \text{ if }  \ R\neq R'
\end{array} \right.$$
Then, $\psi' \succcurlyeq \varphi$ and $\varphi \not\succcurlyeq \psi'$, a contradiction. Hence, $\varphi$ is \emph{efficient}.
\end{proof}


Lemma \ref{lemma eff} shows that rule  $\varphi$ is \emph{efficient} whenever it is a ``maximal element'' of $\Phi$ preordered by $\succcurlyeq.$ Therefore, a way to weaken this property consists of  constraining the set over which a rule must be maximal. We will focus our analysis on the class of \emph{strategy-proof} rules that also fulfill the following informational simplicity property, that is satisfied for most of allocation rules studied in the literature.

\vspace{5 pt}
\noindent
\textbf{Own-peak-onliness:} For each  $ R \in \mathcal{R}^{n},$ each $i \in N,$ and each   $R_{i}' \in \mathcal{R},$  if $p(R_{i}')=p(R_{i}),$ then $\varphi_i(R) = \varphi_i(R_i',R_{-i}).$
\vspace{5 pt}
\\
Let  $\boldsymbol{\Phi^\star}$ denote the class of   \emph{strategy-proof} and  \emph{own-peak-only} rules. We now present a notion of constrained efficiency among \emph{strategy-proof} rules due to \cite{anno2013second}.\footnote{\cite{anno2013second} named this property  \emph{second-best efficiency among strategy-proof rules,} and do not require \emph{own-peak-onliness} of the rules in their definition.}

\begin{definition}
A rule $\varphi \in \Phi^{\star}$ is \textbf{Pareto-undominated strategy-proof} if  for each $\psi \in \Phi^{\star}$ such that $\psi \succcurlyeq \varphi$, we have $\varphi \succcurlyeq \psi$.
\end{definition}
Clearly, this property is implied by \emph{efficiency} when the rule is \emph{strategy-proof} and \emph{own-peak-only} on $\Phi$. A more demanding version of the property requires the rule to be a maximal element on $\Phi^\star$ with respect to preorder $\succcurlyeq$ and that no other rule be welfare-equivalent to it.\footnote{The rules  $\varphi, \psi \in \Phi$ are \textbf{welfare-equivalent} whenever $\varphi \succcurlyeq \psi$ and  $\psi \succcurlyeq \varphi.$}

\begin{definition} A rule $\varphi \in \Phi^\star$ is \textbf{strongly Pareto-undominated strategy-proof} if for each $\psi \in \Phi^\star$ such that $\psi \succcurlyeq \varphi$, we have $\varphi=\psi$.  
\end{definition}

Theorem 3 in \cite{anno2013second} establishes that, in two-agent economies (i.e. when $|N|=2$), every \emph{strategy-proof} and \emph{same-sided} rule is \emph{strongly Pareto-undominated strategy-proof.} The next theorem generalizes the aforementioned result to economies with an arbitrary amount of agents.

\begin{theorem}\label{main}
Let $\varphi$ be a \emph{strategy-proof, unanimous} and \emph{multidimensional replacement monotonic} rule. Then, $\varphi$ is a \emph{strongly Pareto-undominated strategy-proof} rule.
\end{theorem}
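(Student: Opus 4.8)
The plan is to fix an arbitrary $\psi \in \Phi^\star$ with $\psi \succcurlyeq \varphi$ and to show $\varphi = \psi$, which is exactly what \emph{strong Pareto-undomination} asks for. Since the statement asserts that $\varphi$ belongs to $\Phi^\star$, I treat $\varphi$ as \emph{own-peak-only} throughout, so that all allotments may be read off peak profiles. I fix an arbitrary economy $R$ and write $q \equiv \psi(R)$, so that the goal becomes $\varphi(R)=q$. The device that drives the argument is the auxiliary \emph{peak-matched} profile $R^{\ast}$ determined by $p(R_i^{\ast})=q_i$ for every $i \in N$; since $\psi(R)$ is feasible, $\sum_{i \in N} p^\ell(R_i^{\ast})=\sum_{i\in N} q_i^\ell=\Omega^\ell$ for each $\ell \in L$.

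The first key step is to show that $\varphi$ and $\psi$ coincide at $R^{\ast}$, with every agent receiving her own peak. On one hand, because $\sum_{i\in N}p^\ell(R_i^{\ast})=\Omega^\ell$ for each $\ell$, \emph{unanimity} forces $\varphi_i(R^{\ast})=p(R_i^{\ast})=q_i$ for every $i$. On the other hand, $\psi \succcurlyeq \varphi$ gives $\psi_i(R^{\ast})\,R_i^{\ast}\,\varphi_i(R^{\ast})=p(R_i^{\ast})$; but by Definition \ref{defunimul} together with strict convexity the peak $p(R_i^{\ast})$ is the unique $R_i^{\ast}$-maximal bundle, so any bundle weakly preferred to it must equal it, whence $\psi_i(R^{\ast})=q_i$ as well. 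Thus $\varphi(R^{\ast})=\psi(R^{\ast})=q$. This is where \emph{unanimity} earns its keep: it pins $\varphi$ down on peak-matched profiles, while \emph{strategy-proofness} (through domination and the uniqueness of peaks) pins $\psi$ down there too.

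The second, and harder, step is to transport this coincidence from $R^{\ast}$ back to $R$. I would do this by changing the agents' preferences one at a time along a path $R^{\ast}=\bar R^{0},\bar R^{1},\dots,\bar R^{n}=R$, where $\bar R^{k}$ restores the true preference $R_i$ for all $i\le k$ and keeps $R_i^{\ast}$ for $i>k$, and I would prove the invariant $\varphi(\bar R^{k})=q$ by induction on $k$. In the inductive step only agent $k$ changes her report, from $R_k^{\ast}$ to $R_k$. Her own allotment is controlled by \emph{strategy-proofness}: reporting $R_k^{\ast}$ recovers the peak $q_k$, so $\varphi_k(\bar R^{k})\,R_k\,q_k$. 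The allotments of the remaining agents are controlled \emph{commodity by commodity} by \emph{multidimensional replacement monotonicity}: an increase in $\varphi_k^\ell$ forces a weak decrease in $\varphi_j^\ell$ for every $j\neq k$, and conversely. Feeding these sign restrictions into feasibility $\sum_{i\in N}\varphi_i^\ell(\bar R^{k})=\Omega^\ell$ and into the domination inequalities at $\bar R^{k}$ should leave no room for any coordinate to move, yielding $\varphi(\bar R^{k})=q$ and, at $k=n$, the desired $\varphi(R)=q=\psi(R)$.

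The inductive step is where I expect the real difficulty to lie. The asymmetry is essential: \emph{multidimensional replacement monotonicity} constrains $\varphi$ but not the dominating rule $\psi$, so I must combine the per-commodity monotonicity of $\varphi$ with the welfare bounds coming from the \emph{strategy-proofness} of both rules and with the domination inequalities to prevent any agent's allotment from drifting away from $q$. I also expect strict convexity to do double duty: it guarantees the uniqueness of peaks used in the coincidence step, and, through the strict betweenness built into Definition \ref{defunimul}, it is what upgrades the weak welfare comparisons generated along the path into exact equalities of bundles—precisely the strengthening from welfare-equivalence to \emph{equality} that the word \emph{strongly} in the statement demands.
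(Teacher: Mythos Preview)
Your sketch has two genuine gaps.

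\textbf{First}, you treat own-peak-onliness of $\varphi$ as given (``since the statement asserts that $\varphi$ belongs to $\Phi^\star$''). It does not: the hypotheses are only strategy-proofness, unanimity, and multidimensional replacement monotonicity; membership in $\Phi^\star$ is part of what must be proved. In the paper this is a substantial step (Lemma~\ref{peakonly}), which in turn rests on the technical Lemma~\ref{egregium}. You cannot simply assume it.

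\textbf{Second}, the inductive step is not actually carried out, and I do not see how to complete it along the lines you indicate. Your plan relies on the domination inequalities \emph{at the intermediate profile} $\bar R^{k}$, i.e.\ on $\psi_j(\bar R^{k})\,\bar R^{k}_j\,\varphi_j(\bar R^{k})$. But you have no control over $\psi(\bar R^{k})$: the dominating rule $\psi$ is merely strategy-proof and own-peak-only, not replacement-monotonic or non-bossy, so when agent $k$'s peak moves from $q_k$ to $p(R_k)$ the other agents' $\psi$-allotments may move arbitrarily. Thus ``feeding the sign restrictions into the domination inequalities at $\bar R^{k}$'' gives you inequalities involving the unknown vector $\psi(\bar R^{k})$, not $q$, and nothing forces $\varphi(\bar R^{k})$ to remain at $q$. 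Note also that the invariant $\varphi(\bar R^{k})=q$ with $q=\psi(R)$ is not the situation covered by the ``peak-matched'' device (Lemma~\ref{Auxiliar peaks}): that lemma lets you replace preferences by ones whose peaks equal the \emph{$\varphi$-allocation}, and here the reference peaks are the $\psi$-allocation, which is precisely the quantity you are trying to identify with $\varphi(R)$.

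For contrast, the paper's route is quite different: after establishing own-peak-onliness and hence same-sidedness, it translates $\psi\succcurlyeq\varphi$ into an inclusion of option sets (Lemma~\ref{options}), finds a single profile and commodity where $\psi$ gives an agent her peak coordinate while $\varphi$ falls short, and then uses same-sidedness plus feasibility to locate another agent $j$ for whom $\varphi_j^\ell>\psi_j^\ell$ with $\varphi_j^\ell\le p^\ell(R_j)$; Lemma~\ref{pref} then manufactures a preference (same peak) under which $j$ strictly prefers $\varphi_j$ to $\psi_j$, contradicting domination. No path or induction through intermediate profiles is needed.
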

\begin{proof} See Appendix \ref{prueba teo 1}.
\end{proof}

\begin{remark}\emph{Strictly speaking, since the \emph{own-peak-only} property is embedded in our definition of \emph{Pareto-undominated strategy-proofness}, Theorem 3 in \cite{anno2013second} is \emph{not} implied by Theorem \ref{main}. However, within \emph{own-peak-only} rules, the generalization is valid since: (i) by Remark \ref{remnonbossy}, when $|N|=2,$ \emph{multidimensional replacement monotonicity} is trivially satisfied by any rule; and (ii)   \emph{unanimity} and  \emph{multidimensional replacement monotonicity} imply \emph{same-sidedness} (see Lemma \ref{SS} in the Appendix). 
}
\end{remark}

In the next section, we present a broad class of rules that are \emph{Pareto-undominated strategy-proof}.

\section{Multidimensional sequential rules}\label{SecuencialesVariosBienes}

In this section we study a multidimensional generalization of the sequential rules presented in \cite{barbera1997strategy} for one-commodity economies with single-peaked preferences. \cite{barbera1997strategy} define a one-dimensional sequential rule by taking a feasible allocation as the initial reference point and then applying a (weakly) welfare-improving iterative process that completes in at most $|N|$ steps. The main result in \cite{barbera1997strategy} says that a rule is sequential if and only if it is \emph{strategy-proof,} \emph{efficient} and \emph{replacement monotonic}  (recall that in one-commodity economies,  \emph{efficiency}  is equivalent to \emph{same-sidedness}).

Since, in one-commodity economies, \emph{strategy-proofness} and \emph{efficiency} imply the \emph{own-peak-only} property \citep[see Lemma 1 in][]{Ching94}, and \emph{replacement monotonicity} implies \emph{non-bossiness}, it follows that one-dimensional sequential rules are ``peaks-only'': the rule determines the overall allocation solely based on the profile of agents' peak consumption bundles.\footnote{\textbf{Peaks-onliness:} For each  $R, R' \in \mathcal{R}^{n}$ such that $p(R)=p(R'),$ we have $\varphi(R) = \varphi(R').$ \label{footnote peaks-only}} Therefore, for each $\ell \in L$, without loss of generality we can write a one-dimensional sequential rule as a function $\phi^\ell: (X^\ell)^n \longrightarrow Z^\ell$.

We will define multidimensional rules based on these one-dimensional rules as follows:

\vspace{5 pt}
\noindent
\textbf{Multidimensional sequential rule, $\boldsymbol{\phi}$:} For each $R \in \mathcal{R}$ and each $i \in N,$
 $$\phi_i(R) \equiv \left( \phi_i^1(p^1(R)), \ldots, \phi_i^l(p^l(R))\right),$$ where, for each $\ell \in L,$  $\phi^\ell:(X^\ell)^n \longrightarrow Z^\ell$ is a one-dimensional sequential rule à la \cite{barbera1997strategy} and $p^\ell(R)$ denotes the projection of $p(R)$ onto $[0,\Omega^\ell]^n$. 
\vspace{5 pt}

\begin{lemma}\label{secvarios}
Every multidimensional sequential rule is \emph{strategy-proof,} \emph{same-sided} and \emph{multidimensional replacement monotonic.}
\end{lemma}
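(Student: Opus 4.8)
The plan is to prove each of the three properties separately by reducing the multidimensional claim to the corresponding one-dimensional fact about the component rules $\phi^\ell$, exploiting the product structure of the definition $\phi_i(R) = (\phi_i^1(p^1(R)), \ldots, \phi_i^l(p^l(R)))$. The crucial structural observation I would record first is that, because each component $\phi^\ell$ depends only on the $\ell$-th coordinates of the peaks, the allocation in commodity $\ell$ is completely decoupled from what happens in the other commodities; thus every statement that is coordinatewise in $L$ can be verified one coordinate at a time, where each coordinate is governed by a genuine one-dimensional sequential rule à la Barberà, Jackson and Neme.

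For \emph{same-sidedness} and \emph{multidimensional replacement monotonicity} this decoupling does essentially all the work. Fix $\ell \in L$. Since $\phi^\ell$ is a one-dimensional sequential rule, by the characterization in \cite{barbera1997strategy} it is \emph{efficient} in the one-commodity sense, which is equivalent to \emph{same-sidedness}: if $\sum_{i} p^\ell(R_i) \geq \Omega^\ell$ then $\phi_i^\ell(p^\ell(R)) \leq p^\ell(R_i)$ for each $i$, and symmetrically for the reverse inequality. As $\varphi_i^\ell(R) = \phi_i^\ell(p^\ell(R))$ by definition, the same-sidedness conditions (i) and (ii) for $\phi$ follow immediately for each $\ell$, giving same-sidedness of the multidimensional rule. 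For multidimensional replacement monotonicity, fix $\ell$, an agent $i$, and a deviation $R_i'$; since $\phi^\ell$ is \emph{replacement monotonic} in the one-dimensional sense, a change in agent $i$'s $\ell$-coordinate allotment forces the opposite weak change in every other agent's $\ell$-allotment, and because only the $\ell$-th coordinate of $\phi$ sees the $\ell$-th coordinate of the peaks, this is precisely conditions (i) and (ii) of multidimensional replacement monotonicity. Here I would be slightly careful to note that changing $R_i$ to $R_i'$ may change several peak coordinates at once, but each coordinate $\ell$ is handled by its own rule $\phi^\ell$ reacting only to the change in $p^\ell(R_i)$, so the coordinatewise monotonicity is preserved regardless of how the other coordinates move.

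The genuinely less automatic property is \emph{strategy-proofness}, because strategy-proofness is a statement about the agent's preference $R_i$ over the full bundle $\varphi_i(R) \in X$, not a coordinatewise comparison, so the product structure cannot be invoked as directly. The plan here is to use the multidimensional single-peakedness of $R_i$ together with the fact that each $\phi^\ell$ is one-dimensionally strategy-proof. Concretely, fix $R$, an agent $i$, and a misreport $R_i'$. Each component $\phi^\ell$, being strategy-proof and peaks-only in one dimension, guarantees that the truthful $\ell$-allotment $\varphi_i^\ell(R) = \phi_i^\ell(p^\ell(R))$ is \emph{weakly closer to the true peak} $p^\ell(R_i)$ than the misreported allotment $\varphi_i^\ell(R_i',R_{-i}) = \phi_i^\ell(p^\ell(R_i'), p^\ell_{-i}(R))$; more precisely, the truthful allotment lies between the true peak and the misreported allotment in each coordinate. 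This is the standard consequence of one-dimensional strategy-proofness on the single-peaked domain (the truthful report cannot be beaten, which coordinatewise means one cannot move strictly closer to the peak by lying). Having established that $\varphi_i(R)$ is \emph{between} $p(R_i)$ and $\varphi_i(R_i',R_{-i})$ in the sense of the betweenness definition given in the excerpt, multidimensional single-peakedness (Definition \ref{defunimul}) yields directly that $\varphi_i(R) \ R_i \ \varphi_i(R_i',R_{-i})$, which is exactly strategy-proofness.

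The main obstacle I anticipate is making the ``betweenness'' step fully rigorous: I must verify that for \emph{every} coordinate $\ell$ simultaneously, $\varphi_i^\ell(R)$ lies between $p^\ell(R_i)$ and $\varphi_i^\ell(R_i',R_{-i})$, and the subtlety is that one-dimensional strategy-proofness of $\phi^\ell$ is a statement about the single-peaked preference induced on coordinate $\ell$, whereas the agent's actual preference $R_i$ is a joint preference over all coordinates. I would resolve this by observing that the coordinatewise betweenness conclusion depends only on the peaks — since each $\phi^\ell$ is peaks-only, the relevant comparison is between $p^\ell(R_i)$, the truthful $\ell$-allotment, and the misreported $\ell$-allotment, all of which are determined by peak coordinates alone — so I can apply the one-dimensional same-sidedness/strategy-proofness structure of $\phi^\ell$ to obtain betweenness in coordinate $\ell$ without reference to the full preference. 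Once betweenness holds in every coordinate, Definition \ref{defunimul} closes the argument in one line, and the potential edge case where $\varphi_i(R) = \varphi_i(R_i',R_{-i})$ is handled trivially by reflexivity of $R_i$.
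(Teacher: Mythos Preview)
Your proposal is correct and follows essentially the same approach as the paper: inherit \emph{same-sidedness} and \emph{multidimensional replacement monotonicity} directly from the coordinate rules, and for \emph{strategy-proofness} establish that the truthful allotment is coordinatewise between the peak and the misreported allotment, then invoke Definition~\ref{defunimul}. The only difference is that the paper makes the betweenness step explicit through a case analysis driven by \emph{same-sidedness} and an auxiliary ``uncompromise'' lemma (Lemma~\ref{uncompromise}), whereas you appeal to it as a standard consequence of one-dimensional \emph{strategy-proofness} plus \emph{peaks-onliness}.
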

\begin{proof} See Appendix \ref{prueba lema 3}.
\end{proof}

\noindent
The following result is an immediate consequence of Theorem \ref{main} and Lemma \ref{secvarios}.

\begin{theorem}\label{theo sequential}
Every multidimensional sequential rule is \emph{strongly Pareto-undominated} 
 \emph{strategy-proof.}
\end{theorem}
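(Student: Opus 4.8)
The plan is to derive Theorem \ref{theo sequential} directly from the two results that immediately precede it, namely the sufficient condition for strong Pareto-undomination (Theorem \ref{main}) and the structural properties of multidimensional sequential rules (Lemma \ref{secvarios}). The statement is explicitly flagged as ``an immediate consequence'' of these two, so the proof should be a short logical composition rather than a substantive new argument.

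First I would fix an arbitrary multidimensional sequential rule $\phi$. By Lemma \ref{secvarios}, $\phi$ is \emph{strategy-proof}, \emph{same-sided} and \emph{multidimensional replacement monotonic}. The only gap between what Lemma \ref{secvarios} delivers and what Theorem \ref{main} requires as hypotheses is \emph{unanimity}: Theorem \ref{main} asks for strategy-proofness, unanimity, and multidimensional replacement monotonicity, whereas Lemma \ref{secvarios} gives the (stronger) property of same-sidedness in place of unanimity. So the single bridging observation needed is that \emph{same-sidedness implies unanimity}, which the paper already established in the discussion following the definition of \emph{unanimity} (``every \emph{same-sided} rule is \emph{unanimous}''). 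Invoking this, $\phi$ is strategy-proof, unanimous, and multidimensional replacement monotonic.

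With all three hypotheses of Theorem \ref{main} verified, I would apply that theorem to $\phi$ and conclude that $\phi$ is \emph{strongly Pareto-undominated strategy-proof}. Since $\phi$ was an arbitrary multidimensional sequential rule, this establishes the claim for every such rule. I should also note in passing that $\phi \in \Phi^\star$ (i.e.\ that $\phi$ is \emph{own-peak-only}), which is part of the ambient requirement for the Pareto-undomination notion; this is immediate from the construction of $\phi$, whose allotment to each agent depends only on the profile of peaks through the underlying one-dimensional peaks-only rules $\phi^\ell$.

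I do not anticipate a genuine obstacle here, as the proof is purely a matter of checking that the hypotheses line up. The one point that requires care is making the implication ``same-sidedness $\Rightarrow$ unanimity'' explicit rather than tacitly assuming Theorem \ref{main} can be fed same-sidedness directly; stating that step cleanly is what keeps the argument rigorous. In short, the proof reads: by Lemma \ref{secvarios} every multidimensional sequential rule is strategy-proof, same-sided and multidimensional replacement monotonic; same-sidedness implies unanimity; hence by Theorem \ref{main} every multidimensional sequential rule is strongly Pareto-undominated strategy-proof.
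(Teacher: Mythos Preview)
Your proposal is correct and matches the paper's own approach exactly: the paper states the theorem as ``an immediate consequence of Theorem \ref{main} and Lemma \ref{secvarios}'' without further argument, and you have spelled out precisely the one-line bridge (same-sidedness $\Rightarrow$ unanimity) needed to feed the output of Lemma \ref{secvarios} into the hypotheses of Theorem \ref{main}. Your side remark about \emph{own-peak-onliness} is harmless but unnecessary, since Theorem \ref{main} derives that property internally (via Lemma \ref{peakonly}) rather than assuming it.
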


To present a nice feature of some sequential rules, we first introduce a property. It establishes that each agent must find her assignment at least as desirable as the egalitarian allocation.

\vspace{5 pt}
\noindent
\textbf{Egalitarian lower bound:} For each $ R \in \mathcal{R}^{N}$ and each $i \in N,$ $\varphi_i(R) \ R_i \ \frac{\Omega}{n}.$

\begin{remark}\label{remark egalitarian} \em
    It is easy to see that if a sequential rule has as initial reference point the egalitarian allocation $\frac{\Omega}{n}$, then the rule meets the \emph{egalitarian lower bound.}
\end{remark}

\section{The multidimensional uniform rule}\label{UniformeVariosBienes}

Introduced in the axiomatic literature by \cite{Sprumont91}, the uniform rule selects an efficient allocation that is as close to the egalitarian allocation as possible: each agent receives either her peak amount or a common amount chosen to ensure feasibility of the overall assignment.

\begin{figure}[ht]
\begin{center}
\begin{tikzpicture}

\fill[blue!5] (0,4) -- (6,4) -- (6,10) -- (0,10) -- (0,9.3) --  (-.2,9.2) -- (.2,9.1) -- (0,9) -- (0,4);


\draw[thick,-, ] 
(0,0) -- (9,0) ;

\draw[thick,-, ] 
(9,0)-- (9.1,.2) -- (9.2,-.2) -- (9.3,0) ;

\draw[thick,->, ] 
(9.3,0) -- (11,0) ;

\draw[thick,-,] (0,0) -- (0,9);

\draw[thick,-, ] 
(0,9)-- (.2,9.1) -- (-.2,9.2) -- (0,9.3) ;

\draw[thick,->, ] 
(0,9.3) -- (0,11) ;




\draw[thick, blue!20] (6,4) -- (6,10) -- (0,10);
\draw[thick, blue!20] (0,4) -- (6,4);
\draw[
dashed] (6,0) -- (6,4);

\begin{scope}
        \clip (0,0) rectangle (9,9);
\draw[blue, thick] (2,2) ellipse (2.5cm and 4cm);
\draw[blue, thick] (2,2) ellipse (1.25cm and 2cm);


\draw[red, thick] (4,7) ellipse (1.75cm and 1.093cm);
\draw[red, thick] (4,7) ellipse (3.5cm and 2.186cm);



\draw[orange, thick] (8,4) ellipse (1 cm and 1.75cm);
\draw[orange, thick] (8,4) ellipse (2cm and 3.5cm);

\end{scope}

\draw[
dashed] (0,2) -- (2,2);
\draw[
dashed] (6,4) -- (8,4);
\draw[
dashed] (4,0) -- (4,7);
\draw[
dashed] (0,7) -- (4,7);
\draw[
dashed] (2,0) -- (2,4);
\draw[
dashed] (8,0) -- (8,4);

\draw[fill=blue] (2,2) circle[radius=1.5 pt] node[right]{\textcolor{blue}{$p(R_1)$}};
\draw[fill=red] (4,7) circle[radius=1.5 pt] node[above]{\textcolor{red}{$p(R_2)=u_2(R)$}};
\draw[fill=orange] (8,4) circle[radius=1.5 pt] node[above]{\textcolor{orange}{$p(R_3)$}};

\draw[fill=black] (0,7) circle[radius=1.5 pt] node[left]{$7$};
\draw[fill=black] (0,2) circle[radius=1.5 pt] node[left]{$2$};
\draw[fill=black] (2,0) circle[radius=1.5 pt] node[below]{$2$};
\draw[fill=black] (4,0) circle[radius=1.5 pt] node[below]{$4$};
\draw[fill=black] (8,0) circle[radius=1.5 pt] node[below]{$8$};

\draw[fill=black] (0,4) circle[radius=1.5 pt] node[left]{$\lambda^2=4$};
\draw[fill=black] (6,0) circle[radius=1.5 pt] node[below]{$\lambda^1=6$};
\draw[fill=orange] (6,4) circle[radius=1.5 pt] node[above left]{\textcolor{orange}{$u_3(R)$}};
\draw[fill=blue] (2,4) circle[radius=1.5 pt] node[above]{\textcolor{blue}{$u_1(R)$}};

\draw[fill=black] (10,0) circle[radius=1.5 pt] node[below]{$\Omega^1=12$};

\draw[fill=black] (0,10) circle[radius=1.5 pt] node[left]{$\Omega^2=15$};

\end{tikzpicture}
\caption[La regla uniforme con varios bienes]{\label{uniformevariosbienes}\small{\textsf{\textbf{The multidimensional uniform rule}. Here, $|N|= 3$ and $|L|= 2,$ $\Omega^1=12$, $\Omega^2=15$, and $R \in \mathcal{R}^3$ is such that $p^1(R)=(2,4,8)$ and $p^2(R)=(2,7,4)$. Thus, there is not enough of the first commodity and there is too much of the second. Then, an upper bound $\lambda^1 \in \mathbb{R}_+$ and a lower bound  $\lambda^2 \in \mathbb{R}_+$ are chosen and each agent maximizes her preference in the rectangle $[0, \lambda^1]\times[\lambda^2, \Omega^2]$, so the multidimensional uniform rule allocates $u^1(R)=(2,4,6)$ and $u^2(R)=(4,7,4)$. 
}}}
\end{center}
\end{figure}

The multidimensional uniform rule is defined by applying the uniform rule commodity-wise. It determines bounds that, in turn, define a common ``budget set'' within which each agent maximizes her preference. These bounds, which geometrically form an $|L|$-dimensional box, are chosen to ensure feasibility (see Figure \ref{uniformevariosbienes}). The multidimensional uniform rule can also be seen as a special case of multidimensional sequential rule. It uses the egalitarian allocation as the initial reference point and, at each step of the iterative process, treats agents in a ``non-discriminatory'' manner.

Given $\ell \in L$, $i \in N$, and $R_i \in \mathcal{R}$, let $p^\ell(R_i)$ be the $\ell$-projection of agent $i$'s peak vector $p(R_i)$.

\vspace{5 pt}
\noindent
\textbf{Multidimensional uniform rule, $\boldsymbol{u}$:} For each $R \in \mathcal{R}^n,$ each  $i \in N,$ and each  $\ell \in L,$
 $$u_i^\ell(R)\equiv\left\{ \begin{array}{l l}
 \min\{p^\ell(R_{i}), \lambda^\ell(R)\} & \mathrm{if} \ \ \sum_{j \in N} p^\ell(R_{j}) \geq \Omega^\ell \\
\max\{p^\ell(R_{i}), \lambda^\ell(R)\} & \mathrm{if} \ \ \sum_{j \in N} p^\ell(R_{j}) \leq \Omega^\ell
\end{array} \right.$$ where $\lambda^\ell(R) \ge 0$ and solves  $\sum_{j \in N}u_j^\ell(R)=\Omega^\ell$ for each $\ell \in L.$
\vspace{5 pt}

This rule is \emph{strategy-proof} \citep[see][]{amoros2002single}, \emph{same-sided} (and therefore \emph{unanimous}), \emph{own-peak-only} and \emph{multidimensional replacement monotonic.} However, it is not \emph{efficient.} This is shown in the following example.

\begin{example}\label{ej} {\em \cite[Adapted from Example 5 in][]{Morimoto13}}
Let be $N=L=\{1,2\}$ and $\Omega = (18,12).$ Let $R_1\in \mathcal{R}$ be such that $p(R_1)=(13.5,9)$ and $(7.5,7.5) \ P_1 \ (9,6)$, and let $R_2\in \mathcal{R}$ be such that $p(R_2)=(12, 10.5)$ and $(10.5, 4.5) \ P_2 \ (9,6)$. Then,  $u_i(R)=\frac{\Omega}{2}=(9,6)$ for $i \in \{1,2\}$. Consider the feasible allocation $x=(x_1^1,x_1^2,x_2^1,x_2^2)=(7.5,7.5,10.5,4.5)$. Therefore, $x_i \ P_i \ \frac{\Omega}{2}$ for $i \in \{1,2\}$. Hence,   the multidimensional uniform rule is not \emph{efficient.} \hfill $\Diamond$
\end{example}

\begin{figure}[ht]
\begin{center}

\begin{tikzpicture}

\draw[thick,->, ] node[left, black]{$O_1$} (0,0) -- (10,0) ;
\draw[thick,->,] (0,0) -- (0,7);

\draw[thick,<-, ] (-1,6) -- (9,6) node[right, black]{$O_2$};
\draw[thick,<-,] (9,-1) -- (9,6);

\begin{scope}
        \clip (6.75,4.5) ellipse (3.76cm and 1.88cm);
        \fill[blue!10] (3,0.75) ellipse (1.875cm and 3.75cm);
    \end{scope}

\draw[
dashed] (4.5,0) -- (4.5,6);
\draw[
dashed] (0,3) -- (9,3);

\begin{scope}
        \clip (0,0) rectangle (9,6);
\draw[blue, thick] (6.75,4.5) ellipse (3.76cm and 1.88cm);
\draw[blue, thick] (6.75,4.5) ellipse (1.88cm and 0.94cm);
\draw[blue, thick] (6.75,4.5) ellipse (7.52cm and 3.76cm);


\draw[red, thick] (3,0.75) ellipse (1.875cm and 3.75cm);
\draw[red, thick] (3,0.75) ellipse (0.937cm and 1.875cm);
\draw[red, thick] (3,0.75) ellipse (3.75cm and 7.5cm);
\end{scope}

\draw[fill=blue] (6.75,4.5) circle[radius=1.5 pt] node[right]{\textcolor{blue}{$p(R_1)$}};
\draw[fill=red] (3,0.75) circle[radius=1.5 pt] node[below]{\textcolor{red}{$p(R_2)$}};
\draw[fill=black] (4.5,3) circle[radius=1.5 pt] node[below right]{$\frac{\Omega}{2}$};
\draw[fill=black] (3.75,3.75) circle[radius=1.5 pt] node[above left]{$x$};

\end{tikzpicture}
\caption[La regla uniforme con varios bienes no es \emph{eficiente}]
{\label{unifnoefic}\small{\textsf{\textbf{The multidimensional uniform rule is not efficient}. Here, both peaks are greater than equal division for each good, so the uniform rule assigns the egalitarian allocation. However,  all the allocations in the shaded area Pareto dominate it.}}}
\end{center}
\end{figure}

Example \ref{ej} is illustrated in Figure \ref{unifnoefic}. In this Edgeworth box preferences are specified so that: (i) make the uniform rule to recommend the egalitarian allocation, and  (ii) there are feasible allocations that Pareto-dominate the egalitarian allocation.

The first result concerning the multidimensional uniform rule is consequence of the rule being \emph{strategy-proof,}  \emph{unanimous} and \emph{multidimensional replacement monotonic}, together with Theorem \ref{main}.

\begin{corollary}\label{Ufeenm}
The multidimensional uniform rule is  \emph{strongly Pareto-undominated strategy-proof.}
\end{corollary}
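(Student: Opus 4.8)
The plan is to invoke Theorem \ref{main}: any \emph{strategy-proof}, \emph{unanimous}, and \emph{multidimensional replacement monotonic} rule is \emph{strongly Pareto-undominated strategy-proof}. Hence the entire task reduces to checking that the multidimensional uniform rule $u$ satisfies these three hypotheses, all of which were already asserted in the text immediately following the rule's definition.

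First, I would simply record that $u$ is \emph{strategy-proof}; this is not re-proved here but taken from \cite{amoros2002single}. Second, for \emph{unanimity}, I would argue that $u$ is \emph{same-sided} directly from its definition—when $\sum_{j \in N} p^\ell(R_j) \geq \Omega^\ell$ the expression $\min\{p^\ell(R_i),\lambda^\ell(R)\}$ forces $u_i^\ell(R) \leq p^\ell(R_i)$ for every $i$, and symmetrically in the opposite case—and then use the fact, noted earlier in the paper, that \emph{same-sidedness} implies \emph{unanimity}. Concretely, if $\sum_{j \in N} p^\ell(R_j) = \Omega^\ell$ for every $\ell$, then feasibility $\sum_{j \in N} u_j^\ell(R) = \Omega^\ell$ combined with same-sidedness pins each coordinate down to the peak, which is exactly what unanimity demands.

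Third, and this is the only step requiring genuine verification, \emph{multidimensional replacement monotonicity}. The cleanest route is to recall that $u$ is a special case of a \emph{multidimensional sequential rule} (egalitarian initial reference point, non-discriminatory treatment at each step), so this property follows at once from Lemma \ref{secvarios}. If one prefers a self-contained argument, the key observation is that $u$ is built coordinate-wise from the one-dimensional uniform rule $u^\ell$, and the multidimensional property is itself phrased commodity by commodity; thus it suffices that each $u^\ell$ be one-dimensionally replacement monotonic, which holds because the one-dimensional uniform rule is a sequential rule in the sense of \cite{barbera1997strategy}. I expect the only delicate point to be making this decoupling across commodities fully explicit, so that no interaction between coordinates can reverse the required monotonicity direction.

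With the three hypotheses in hand, Theorem \ref{main} applies verbatim and yields that $u$ is \emph{strongly Pareto-undominated strategy-proof}, completing the argument. (Equivalently, one could observe that $u$ is a multidimensional sequential rule and appeal directly to Theorem \ref{theo sequential}.)
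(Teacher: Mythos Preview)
Your proposal is correct and follows the same approach as the paper: the corollary is stated immediately after noting that $u$ is \emph{strategy-proof}, \emph{unanimous}, and \emph{multidimensional replacement monotonic}, and the paper simply records it as a consequence of these three properties together with Theorem~\ref{main}. Your added verification of each hypothesis (via \cite{amoros2002single}, \emph{same-sidedness}, and Lemma~\ref{secvarios} or the sequential-rule observation) is a welcome elaboration but does not depart from the paper's argument.
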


Next, we present a new characterization of the multidimensional uniform rule.

\begin{theorem}\label{caractvariosbienes}
The multidimensional uniform rule is the only \emph{Pareto-undominated strategy-proof} rule that satisfies \emph{equal treatment} and  \emph{multidimensional replacement monotonicity.}
\end{theorem}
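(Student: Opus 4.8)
\emph{Plan.} The statement has two parts: that the multidimensional uniform rule $u$ satisfies the three properties, and that it is the unique such rule. For the first part I would simply collect facts already available: $u$ is \emph{strongly Pareto-undominated strategy-proof} by Corollary \ref{Ufeenm}, hence a fortiori \emph{Pareto-undominated strategy-proof}; it is \emph{multidimensional replacement monotonic} as recorded in Section \ref{UniformeVariosBienes}; and \emph{equal treatment} is immediate from the commodity-wise formula defining $u$, since two agents with the same preference share the same peak and face the same bound $\lambda^{\ell}(R)$, so they receive the very same (hence indifferent) bundle.

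For uniqueness, let $\varphi$ be any \emph{Pareto-undominated strategy-proof} rule that is \emph{equally-treating} and \emph{multidimensional replacement monotonic}; by definition $\varphi\in\Phi^{\star}$, so it is also \emph{strategy-proof} and \emph{own-peak-only}. My plan is to reduce everything to \emph{same-sidedness}: once I know that $\varphi$ is \emph{strategy-proof}, \emph{same-sided} and \emph{equally-treating}, the characterization of the multidimensional uniform rule by exactly these three properties \citep{amoros2002single} forces $\varphi=u$. Moreover, since $\varphi$ is already \emph{multidimensional replacement monotonic}, Lemma \ref{SS} tells me that it suffices to prove that $\varphi$ is \emph{unanimous} in order to conclude \emph{same-sidedness}. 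Thus the whole argument collapses to a single point: deducing \emph{unanimity} from \emph{Pareto-undominated strategy-proofness}.

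This is the step I expect to be the main obstacle. Fix a profile $R^{*}$ with $\sum_{i\in N}p^{\ell}(R_i^{*})=\Omega^{\ell}$ for every $\ell\in L$, and suppose toward a contradiction that $\varphi(R^{*})\neq p(R^{*})$. Then the peak allocation $p(R^{*})$ is feasible and assigns every agent her globally most preferred bundle, so it weakly Pareto-dominates $\varphi(R^{*})$ and strictly dominates it for each agent not receiving her peak. The goal is to upgrade this one-profile domination into a rule $\psi\in\Phi^{\star}$ with $\psi\succcurlyeq\varphi$ and $\varphi\not\succcurlyeq\psi$, contradicting that $\varphi$ is \emph{Pareto-undominated strategy-proof}. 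The natural candidate assigns the (feasible) peak allocation on the whole ``unanimity region'' $\{R:\sum_{i\in N}p^{\ell}(R_i)=\Omega^{\ell}\ \text{for all }\ell\}$, where it coincides with $u$, and assigns $\varphi$ off that region.

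The difficulty is that, while such a $\psi$ is trivially \emph{own-peak-only}, its \emph{strategy-proofness} is delicate: an agent sitting off the unanimity region might gain by misreporting a peak that makes the peaks add up exactly to $\Omega$, thereby jumping into the region and collecting a peak-completing bundle. Ruling out these boundary deviations is exactly where I expect \emph{equal treatment} and \emph{multidimensional replacement monotonicity} to be essential: replacement monotonicity ties the other agents' allotments to the deviator's across the preference change, and, together with the \emph{own-peak-only} structure of $\varphi$ and the betweenness property of multidimensional single-peakedness (Definition \ref{defunimul}), should show either that no such deviation is profitable, so that $\psi\in\Phi^{\star}$, or that $\varphi$ itself already assigns peaks on the region; a more carefully engineered competitor than the naive patch above may be needed to keep $\psi$ strategy-proof. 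Once \emph{unanimity}, and hence (via Lemma \ref{SS}) \emph{same-sidedness}, is secured, the appeal to \citep{amoros2002single} closes the argument.
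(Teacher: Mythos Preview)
Your overall plan coincides with the paper's: verify that $u$ has the three properties (exactly as you say), and for uniqueness reduce everything to showing that any such $\varphi$ is \emph{unanimous}, then invoke an existing characterization. The paper isolates this reduction as a separate lemma (Lemma~\ref{lemma unanimity}) and then applies Corollary~1 of \cite{Morimoto13} (strategy-proofness, unanimity, equal treatment, non-bossiness), rather than going through Lemma~\ref{SS} and \cite{amoros2002single}; both routes close the argument.

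Where your proposal is incomplete is precisely the step you flag: the strategy-proofness of the competitor $\psi$. Two corrections are worth making. First, you do not need to patch $\psi$ on the \emph{entire} unanimity region; it is enough to fix a single profile $\overline{R}$ with $\sum_i p^\ell(\overline{R}_i)=\Omega^\ell$ and $\varphi(\overline{R})\neq p(\overline{R})$, and set $\psi(R)=p(\overline{R})$ whenever $p(R)=p(\overline{R})$ and $\psi(R)=\varphi(R)$ otherwise. This smaller modification makes the manipulation analysis almost trivial: the only interesting deviation is from a profile $\widetilde{R}$ with $p(\widetilde{R})\neq p(\overline{R})$ into one with the peak profile $p(\overline{R})$, which forces $p(\widetilde{R}_{-i})=p(\overline{R}_{-i})$ and $p(R_i')=p(\overline{R}_i)$. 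Second, and contrary to your expectation, \emph{equal treatment} plays no role here. What drives the argument is the betweenness device of Lemma~\ref{pref}: strategy-proofness and own-peak-onliness of $\varphi$ (plus non-bossiness, hence peaks-onliness, from \emph{multidimensional replacement monotonicity}) force both $\varphi_i(\overline{R})$ and $\varphi_i(\widetilde{R}_i,\overline{R}_{-i})$ to lie between $p(\widetilde{R}_i)$ and $p(\overline{R}_i)$ coordinatewise, whence $\varphi_i(\widetilde{R}_i,\overline{R}_{-i})\ \widetilde{R}_i\ p(\overline{R}_i)$ by multidimensional single-peakedness, ruling out the manipulation. So your ``carefully engineered competitor'' is unnecessary; the naive one (restricted to a single peak profile) already works.
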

\begin{proof} See Appendix \ref{prueba teo 3}.
\end{proof}

\begin{remark}\emph{
The characterization of Theorem \ref{caractvariosbienes} is also valid replacing \emph{equal treatment} for \emph{equal treatment in physical terms.}\footnote{\textbf{Equal treatment in physical terms:} For each $R \in \mathcal{R}^{N}$ and each $\{i,j\} \subset N,$ if $R_i=R_j$ then $\varphi_i(R)=\varphi_j(R).$}}
\end{remark}

\begin{remark}{\em  \citep[A comparison with Theorem 4 in][]{anno2013second}} \em 
By Remark \ref{remnonbossy}, rules in two-agent economies are always \emph{multidimensional replacement monotonic}. Therefore, Theorem 4 in \cite{anno2013second} can be seen as a particular case of Theorem \ref{caractvariosbienes}.
\end{remark}

\begin{remark}{\em \citep[A comparison with Corollary 1 in][]{Morimoto13}} \em  At the expense of asking for the stronger \emph{multidimensional replacement monotonicity} instead of \emph{non-bossiness} and paying attention only to \emph{strategy-proof} rules that also fulfill the requirement of \emph{own-peak-onliness,} we get a better understanding of the  behavior of the multidimensional uniform rule in terms of efficiency: it is not only \emph{unanimous,} it is undominated (in welfare terms) among all other \emph{strategy-proof} and \emph{own-peak-only} rules.
\end{remark}

\begin{remark}{\emph{(Independence of axioms)} \em The proportional rule\footnote{\textbf{Proportional rule, $\boldsymbol{Pro}$:} For each  $R \in \mathcal{R}^{N},$ each $i \in N,$ and each $\ell \in L$,  $$Pro_i^\ell(R)\equiv\left\{ \begin{array}{l l}
 \frac{p^\ell(R_i)}{\sum_{j \in N} p^\ell(R_j)} \Omega^\ell & \text{if} \ \ \sum_{i \in N} p^\ell(R_{j}) >0, \\
\frac{\Omega^\ell}{n} & \mathrm{otherwise.}\\
\end{array} \right.$$} satisfies \emph{equal treatment} and   \emph{multidimensional replacement monotonicity,} but is not \emph{strategy-proof} and, therefore, is not \emph{Pareto-undominated strategy-proof.} The serial rule\footnote{\textbf{Serial rule,  $\boldsymbol{S}$:} For each $R \in \mathcal{R}^{N}$ and each $\ell \in L,$ if $i \in N \setminus \{n\},$ $S_i^\ell(R)\equiv \min\left\{p^\ell(R_i), \Omega^\ell-
\sum_{j<i}S_j^\ell(R)\right\},$ and $S_n^\ell(R) \equiv  \Omega^\ell-
\sum_{j<n} S_j^\ell(R).$} is  \emph{(strongly) Pareto-undominated strategy-proof} and \emph{multidimensional replacement monotonic},  but does not satisfy  \emph{equal treatment.} 
It is an open question whether  \emph{multidimensional replacement monotonicity} can be eliminated from  the characterization, or at least be weakened  to \emph{non-bossiness.} Anyway, getting rid of  \emph{non-bossiness} is already a non-trivial issue (see Remark  2 in \cite{Morimoto13} and the discussion thereafter about this point).
}
\end{remark}

\section{Final Comments}\label{ComentUnifVarios}

Some final remarks are in order. The property of \emph{multidimensional replacement monotonicity} has been fundamental in extending results from two-agent economies to those with an arbitrary number of agents---particularly in proving Lemma \ref{egregium} in Appendix \ref{appendix}. However, this monotonicity property is quite strong, and its economic interpretation remains unclear. Nevertheless, we believe its use is justified, at least provisionally, as a means to better understand the structure of \emph{strategy-proof} rules. Indeed, our most significant contribution is providing an affirmative answer to the question posed by \cite{anno2013second} regarding whether their two-agent characterization of the multidimensional uniform rule (their Theorem 4) extends to economies with an arbitrary number of agents.

Regarding the multidimensional uniform rule, \cite{anno2013second} offer an alternative characterization in the two-agent case: they assert that \emph{Pareto-undominated strategy-proofness} and the \emph{egalitarian lower bound} uniquely identify this rule. This result follows as a corollary of their Theorem 4, since in two-agent economies the \emph{egalitarian lower bound} implies \emph{equal treatment}. However, as noted in Remark \ref{remark egalitarian}, this characterization no longer holds when there are more than two agents, as any sequential rule that uses the egalitarian allocation as the initial reference point satisfies both properties.

\bibliographystyle{ecta}
\bibliography{library}

\appendix

\section{Proofs} \label{appendix}

\subsection{Previous results}\label{resultados previos}

First, we present several known results that will be used in what follows. The first one is a result due to \cite{Morimoto13}, and states that for    \emph{strategy-proof} and \emph{non-bossy} rules,   \emph{unanimity} is equivalent to \emph{same-sidedness.}

\begin{lemma}{\em \citep{Morimoto13}}\label{SS} Every  \emph{strategy-proof,} \emph{unanimous,} and \emph{non-bossy} rule is  \emph{same-sided.}
\end{lemma}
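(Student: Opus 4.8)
The plan is to prove \emph{same-sidedness} by contradiction, using as the basic engine a \emph{re-peaking invariance lemma}: for any profile $R$ with $x=\varphi(R)$ and any agent $i$, if $R_i'\in\mathcal{R}$ satisfies $p(R_i')=x_i$, then $\varphi(R_i',R_{-i})=\varphi(R)$. I would obtain this from \emph{strategy-proofness} applied in both directions. Writing $y=\varphi(R_i',R_{-i})$, \emph{strategy-proofness} at $R$ gives $x_i\,R_i\,y_i$, and \emph{strategy-proofness} at $(R_i',R_{-i})$ gives $y_i\,R_i'\,x_i$. Since $x_i=p(R_i')$ is the peak of $R_i'$, Definition \ref{defunimul} (with $x_i$ trivially between $p(R_i')$ and any point) forces $x_i\,P_i'\,z$ for every $z\neq x_i$; hence $y_i=x_i$, i.e. $\varphi_i(R_i',R_{-i})=\varphi_i(R)$, and \emph{non-bossiness} upgrades this to $\varphi(R_i',R_{-i})=\varphi(R)$. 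Iterating, one may re-peak any subset of agents to their own assignments without disturbing the allocation.

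To prove part (i), suppose $\sum_{i\in N}p^\ell(R_i)\geq\Omega^\ell$ for some $\ell$ yet $\varphi_k^\ell(R)>p^\ell(R_k)$ for some $k$, with $x=\varphi(R)$. Since $\sum_{i}\bigl(p^\ell(R_i)-x_i^\ell\bigr)=\sum_i p^\ell(R_i)-\Omega^\ell\geq 0$ while agent $k$'s term $p^\ell(R_k)-x_k^\ell$ is strictly negative, feasibility forces some $m\neq k$ with $x_m^\ell<p^\ell(R_m)$; thus a violation always pairs an ``overshooting'' agent with an ``undershooting'' one. The strategy is then to transform $R$, one agent at a time and keeping the allocation fixed at $x$, into a profile $R^\ast$ that is \emph{unanimous} (that is, $\sum_i p^{\ell'}(R_i^\ast)=\Omega^{\ell'}$ for every $\ell'$) but in which agent $k$ retains a peak whose $\ell$-component equals $p^\ell(R_k)<x_k^\ell$. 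Once such an $R^\ast$ is reached, \emph{unanimity} gives $\varphi_k^\ell(R^\ast)=p^\ell(R_k^\ast)<x_k^\ell$, contradicting the preserved value $\varphi_k^\ell(R^\ast)=x_k^\ell$. Part (ii) is entirely symmetric, exchanging the roles of ``above'' and ``below'' the peak.

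The main obstacle is that the re-peaking lemma keeps the allocation fixed only when an agent is re-peaked exactly to her \emph{own} assignment, and re-peaking every $j\neq k$ to $x_j$ forces the unique unanimity-compatible peak of $k$ to be $x_k$ itself, which destroys the contradiction. Hence the genuinely hard step is a \emph{one-sided invariance lemma}: an agent's assignment should be unchanged when her peak is moved to the far side of that assignment (lowering $k$'s $\ell$-peak to $p^\ell(R_k)\leq x_k^\ell$, and raising the others' just enough to restore $\sum_i p^\ell(R_i^\ast)=\Omega^\ell$), so that $R^\ast$ is reached without altering $x$. In one dimension this is immediate, because the option set $O_i(R_{-i})=\{\varphi_i(R_i',R_{-i}):R_i'\in\mathcal{R}\}$ induced by \emph{strategy-proofness} is an interval and $\varphi_i$ is the median of the peak and the interval's endpoints. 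On the multidimensional domain $\mathcal{R}$ the option set need not be a box, so I would establish the invariance directly from \emph{strategy-proofness}, exploiting the betweenness structure of Definition \ref{defunimul} to choose each perturbed preference so that $x_i$ remains its best attainable point, and propagate every single-agent move to the whole allocation via \emph{non-bossiness}. This is precisely the delicate argument carried out in \cite{Morimoto13}, and it is where multidimensional single-peakedness does the real work.
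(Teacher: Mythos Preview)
The paper's own ``proof'' of this lemma is nothing more than the citation ``See Lemma 1 in \cite{Morimoto13}.'' Your proposal ultimately does the same thing: after correctly establishing the re-peaking invariance step (which is exactly the paper's Lemma \ref{Auxiliar peaks}, proved by the same two applications of \emph{strategy-proofness} plus \emph{non-bossiness}) and laying out the natural contradiction strategy, you explicitly identify the one-sided invariance as the hard step and defer it to \cite{Morimoto13}. So you and the paper are aligned---neither supplies a self-contained argument, and both point to the same external source for the substantive work; your write-up simply surrounds that citation with more (correct) scaffolding than the paper does.
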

\begin{proof} See Lemma 1 in \cite{Morimoto13}.
\end{proof}

The next result, due to \cite{amoros2002single}, states that given three allocations $\widehat{x}_i, \overline{x}_i, \widetilde{x}_i \in X,$ if at least one of the coordinates of $\overline{x}_i$  is not between the coordinates of $\widehat{x}_i$ and   $\widetilde{x}_i,$ then  $\widehat{x}_i$ can be considered as the peak of a preference relation in which   $\widetilde{x}_i$ is preferred to $\overline{x}_i.$

\begin{lemma}[\normalfont{Amorós, 2002}]\label{pref} Let $i \in N$ and $\widehat{x}_i, \overline{x}_i, \widetilde{x}_i\in X.$ If $\overline{x}_i$ is not between $\widehat{x}_i$ and $\widetilde{x}_i$, 
then there is $R_i \in \mathcal{R}$ such that $p(R_i)=\widehat{x}_i$ and $\widetilde{x}_i \ P_i \ \overline{x}_i.$
\end{lemma}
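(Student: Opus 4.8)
The plan is to build the required preference explicitly through an additively separable utility $u(y)=-\sum_{\ell\in L} g^\ell(y^\ell)$, where each $g^\ell:[0,\Omega^\ell]\to\mathbb{R}$ is continuous, strictly convex, and has its unique minimum at $\widehat{x}_i^\ell$ (hence it is strictly decreasing on $[0,\widehat{x}_i^\ell]$ and strictly increasing on $[\widehat{x}_i^\ell,\Omega^\ell]$). For any such family, the induced relation $R_i$ already lies in $\mathcal{R}$ with $p(R_i)=\widehat{x}_i$, independently of the precise $g^\ell$: strict convexity of each $g^\ell$ makes $u$ strictly concave, so $R_i$ is continuous and strictly convex; and if $x_i$ is between $\widehat{x}_i$ and $z_i$ then each $x_i^\ell$ lies between $\widehat{x}_i^\ell$ and $z_i^\ell$, so monotonicity of $g^\ell$ on each side of $\widehat{x}_i^\ell$ gives $g^\ell(x_i^\ell)\le g^\ell(z_i^\ell)$, strictly whenever $x_i^\ell\neq z_i^\ell$; summing and negating yields $u(x_i)>u(z_i)$ for $x_i\neq z_i$, which is exactly multidimensional single-peakedness with peak $\widehat{x}_i$. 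It thus remains only to tune the $g^\ell$ so that $\sum_{\ell}g^\ell(\overline{x}_i^\ell)>\sum_{\ell}g^\ell(\widetilde{x}_i^\ell)$, which is equivalent to $\widetilde{x}_i\ P_i\ \overline{x}_i$.

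Since $\overline{x}_i$ is not between $\widehat{x}_i$ and $\widetilde{x}_i$, fix a coordinate $k\in L$ in which $\overline{x}_i^k$ lies strictly outside the interval with endpoints $\widehat{x}_i^k$ and $\widetilde{x}_i^k$; in particular $\overline{x}_i^k\neq\widehat{x}_i^k$. I would first argue that there is a continuous, strictly convex $g^k$ with minimum at $\widehat{x}_i^k$ such that $g^k(\overline{x}_i^k)>g^k(\widetilde{x}_i^k)$. If $\overline{x}_i^k$ and $\widetilde{x}_i^k$ lie on the same side of $\widehat{x}_i^k$, then $\overline{x}_i^k$ is the farther of the two and the symmetric choice $g^k(t)=(t-\widehat{x}_i^k)^2$ works. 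If they lie on opposite sides of $\widehat{x}_i^k$, I would instead glue two parabolas at $\widehat{x}_i^k$, making the branch on the side containing $\overline{x}_i^k$ strictly steeper than the branch on the side containing $\widetilde{x}_i^k$; the result is still continuous, strictly convex, minimized at $\widehat{x}_i^k$, and forces $g^k(\overline{x}_i^k)>g^k(\widetilde{x}_i^k)$.

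With this base function at hand, replace $g^k$ by $M\,g^k$ and put $g^\ell(t)=(t-\widehat{x}_i^\ell)^2$ for every $\ell\neq k$. Then
\[
\sum_{\ell\in L}g^\ell(\overline{x}_i^\ell)-\sum_{\ell\in L}g^\ell(\widetilde{x}_i^\ell)=M\bigl[g^k(\overline{x}_i^k)-g^k(\widetilde{x}_i^k)\bigr]+\sum_{\ell\neq k}\bigl[g^\ell(\overline{x}_i^\ell)-g^\ell(\widetilde{x}_i^\ell)\bigr].
\]
The bracket multiplying $M$ is strictly positive, while the remaining sum is bounded because $X$ is compact; hence the whole expression is positive as soon as $M$ is large enough. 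For any such $M$ the associated $R_i$ belongs to $\mathcal{R}$, has peak $\widehat{x}_i$, and satisfies $\widetilde{x}_i\ P_i\ \overline{x}_i$, which proves the lemma.

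The only delicate configuration is the opposite-sides one, where the peak $\widehat{x}_i^k$ separates $\overline{x}_i^k$ from $\widetilde{x}_i^k$: there a symmetric, distance-based preference could rank $\overline{x}_i$ strictly above $\widetilde{x}_i$ in coordinate $k$, so the naive weighted-Euclidean construction fails. The key fact to exploit is that single-peakedness restricts comparisons only through the betweenness relation and therefore still permits preferences that decay at arbitrarily different rates on the two sides of the peak; the asymmetric gluing is precisely what turns the hypothesis that $\overline{x}_i$ is not between $\widehat{x}_i$ and $\widetilde{x}_i$ into the strict ranking $\widetilde{x}_i\ P_i\ \overline{x}_i$.
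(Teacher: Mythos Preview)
Your construction is correct: the additively separable utility with strictly convex coordinate functions minimized at $\widehat{x}_i^\ell$ yields a continuous, strictly convex, multidimensional single-peaked preference with peak $\widehat{x}_i$, and the asymmetric gluing plus the scaling argument handle the one potentially tricky case (opposite sides of the peak in the critical coordinate) cleanly. Note, however, that the paper does not supply its own proof of this lemma---it simply refers the reader to Lemma~1 in \cite{amoros2002single}---so there is no in-paper argument to compare your approach against; what you have written is essentially a self-contained version of the construction Amor\'os sketches.
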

\begin{proof}
See Lemma  1 in \cite{amoros2002single}.
\end{proof}

Let  be $\varphi \in \Phi$ and $i \in N.$ For each  $R_{-i} \in \mathcal{R}^{n-1}$, define the  \textbf{option set of agent   $\boldsymbol{i}$ left open by $\boldsymbol{R_{-i}}$ under $\boldsymbol{\varphi}$} by $$\boldsymbol{O_i^{\varphi}(R_{-i})}\equiv\{x_i \in X \ | \ \text{there is} \ R_i \in \mathcal{R} \ \text{such that} \ \varphi_i(R_i, R_{-i})=x_i\}.$$ 
For each  $i \in N,$ each $R_i \in \mathcal{R},$ and each $Y \subseteq X$,  define the  \textbf{choice set of agent $\boldsymbol{i}$ on $\boldsymbol{Y}$ with respect to  $\boldsymbol{R_i}$} by $$\boldsymbol{C_i(R_i, Y)}\equiv\{x_i \in Y \ |  \text{ for each } y_i \in Y, x_i \ R_i  \ y_i\}.$$

\begin{lemma}\label{choiceyoption} Let $\varphi$ be a rule. Then,
\begin{enumerate}[(i)]
    \item $\varphi$ is \emph{strategy-proof} if and only if for each $R \in \mathcal{R}$ and each $i \in N,$  $\varphi_i(R) \in C_i(R_i,O_i^{\varphi}(R_{-i})).$

    \item if $\varphi \in \Phi^{\star},$ then for each $i \in N,$ each $R_{-i} \in \mathcal{R}^{n-1},$ and each
$\ell \in L,$ there are $a^\ell, b^\ell \in [0, \Omega^\ell]$ such that $O_i^{\varphi}(R_{-i})=\prod_{\ell \in L} [a^\ell,b^\ell].$

    \item  if $\varphi \in \Phi^{\star},$ then for each $R \in \mathcal{R}$ and each $i \in N,$ $C_i(R_i, O_i^{\varphi}(R_{-i}))=\{\varphi_i(R)\}.$
\end{enumerate}
\end{lemma}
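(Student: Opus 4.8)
The plan is to handle the three parts in order, since part (iii) rests on part (ii) while parts (i) and (iii) reduce to routine revealed-preference arguments once the box structure of the option set is available. For part (i), I would argue straight from the definitions. For the ``only if'' direction, fix $R$ and $i$: the bundle $\varphi_i(R)$ lies in $O_i^{\varphi}(R_{-i})$ by taking $R_i$ itself, and for any $y_i\in O_i^{\varphi}(R_{-i})$ there is $R_i'$ with $\varphi_i(R_i',R_{-i})=y_i$, so \emph{strategy-proofness} gives $\varphi_i(R)\,R_i\,\varphi_i(R_i',R_{-i})=y_i$; hence $\varphi_i(R)$ maximizes $R_i$ on $O_i^{\varphi}(R_{-i})$, i.e.\ $\varphi_i(R)\in C_i(R_i,O_i^{\varphi}(R_{-i}))$. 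For the ``if'' direction, given $R$, $i$ and $R_i'$, the bundle $\varphi_i(R_i',R_{-i})$ belongs to $O_i^{\varphi}(R_{-i})$ by definition, so membership of $\varphi_i(R)$ in the choice set yields $\varphi_i(R)\,R_i\,\varphi_i(R_i',R_{-i})$, which is exactly \emph{strategy-proofness}.

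Part (ii) is the crux. Writing $O\equiv O_i^{\varphi}(R_{-i})$ and using \emph{own-peak-onliness} to view $\varphi_i(\cdot,R_{-i})$ as a function $f$ of the peak alone, I would first record that $f(x)=x$ for every $x\in O$: if $x\in O$ then $x$ is an admissible peak, $x$ is the global $R_i$-maximum on $X\supseteq O$ for any $R_i$ with peak $x$ (by Definition \ref{defunimul}, since $x$ is between $x$ and every other bundle), and part (i) forces $f(x)=x$. Next set $a^\ell\equiv\inf_{x\in O}x^\ell$ and $b^\ell\equiv\sup_{x\in O}x^\ell$; the inclusion $O\subseteq\prod_{\ell\in L}[a^\ell,b^\ell]\subseteq X$ is immediate, so everything reduces to proving $\prod_{\ell\in L}[a^\ell,b^\ell]\subseteq O$. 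Fix $z$ in this box and suppose $z\notin O$, so $w\equiv f(z)\neq z$ while, by part (i) and \emph{own-peak-onliness}, $w\,R_i\,y$ for all $y\in O$ and all $R_i$ with peak $z$. Choosing a coordinate $\ell_0$ with, say, $w^{\ell_0}>z^{\ell_0}$ (the reverse case being symmetric through $b^{\ell_0}$), the relation $a^{\ell_0}\le z^{\ell_0}<w^{\ell_0}$ supplies, by definition of the infimum, some $y\in O$ with $y^{\ell_0}<w^{\ell_0}$; then $w^{\ell_0}$ exceeds both $z^{\ell_0}$ and $y^{\ell_0}$, so $w$ is \emph{not} between $z$ and $y$. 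Lemma \ref{pref}, applied with $\widehat{x}_i=z$, $\overline{x}_i=w$ and $\widetilde{x}_i=y$, then furnishes $R_i\in\mathcal{R}$ with $p(R_i)=z$ and $y\,P_i\,w$, contradicting $w\,R_i\,y$. Hence $z\in O$ and $O=\prod_{\ell\in L}[a^\ell,b^\ell]$. I expect this step to be the main obstacle: the entire force of the argument lies in converting a single-coordinate failure of betweenness into a preference, via Lemma \ref{pref}, that defeats the purported selection $w$.

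Finally, for part (iii) I would combine (i) and (ii). By part (ii) the option set is the box $B\equiv\prod_{\ell\in L}[a^\ell,b^\ell]$, and for a \emph{multidimensional single-peaked} $R_i$ the coordinatewise projection $\pi$ of $p(R_i)$ onto $B$, with $\pi^\ell=\max\{a^\ell,\min\{p^\ell(R_i),b^\ell\}\}$, has the property that for every $y\in B$ with $y\neq\pi$ the bundle $\pi$ is between $p(R_i)$ and $y$. This is checked coordinatewise against the three cases $p^\ell(R_i)\in[a^\ell,b^\ell]$, $p^\ell(R_i)<a^\ell$ and $p^\ell(R_i)>b^\ell$, using $a^\ell\le y^\ell\le b^\ell$. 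By Definition \ref{defunimul} this yields $\pi\,P_i\,y$, so $\pi$ is the unique maximizer of $R_i$ on $B$ and $C_i(R_i,O)=\{\pi\}$. Since part (i) places $\varphi_i(R)$ inside this choice set, we conclude $C_i(R_i,O_i^{\varphi}(R_{-i}))=\{\varphi_i(R)\}$.
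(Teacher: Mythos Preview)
Your argument is correct in all three parts. The treatment of (i) and (iii) matches the paper's: the paper dismisses (i) as ``straightforward'' and derives (iii) by noting that, once the option set is a box, single-peakedness forces $|C_i(R_i,O_i^{\varphi}(R_{-i}))|=1$ and then invoking (i). Your explicit computation of the coordinatewise projection $\pi$ and its betweenness property simply spells out what the paper leaves implicit.

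The substantive difference is in part (ii). The paper does not prove it at all but defers to Lemma~8 of \cite{anno2013second}. You instead give a self-contained proof using only the internal tools of this paper---specifically Lemma~\ref{pref}---by showing that any $z$ in the coordinatewise hull $\prod_\ell[a^\ell,b^\ell]$ that is missing from $O$ would, via a single coordinate where the actual outcome $w=f(z)$ overshoots (or undershoots) both $z$ and some witness $y\in O$, violate \emph{strategy-proofness} for a suitably chosen preference with peak $z$. This is a clean and natural route: it avoids an external reference and makes transparent why \emph{own-peak-onliness} is needed (so that $f$ depends only on the peak and the contradiction via Lemma~\ref{pref} is available for \emph{every} such preference). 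The cost is only a few extra lines; the benefit is a fully self-contained lemma within the paper's own framework.
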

\begin{proof} Part (i) is straightforward and Part (ii) is Lemma 8  in \cite{anno2013second}. To see Part  (iii), notice that Part (ii) and single-peakedness imply  $|C_i(R_i, O_i^{\varphi}(R_{-i}))|=1,$ so the result follows from  Part (i).
\end{proof}

\noindent Since option sets of  \emph{strategy-proof} and \emph{own-peak-only} rules are singletons, we often abuse notation and write, for each $\varphi \in \Phi^{\star},$ each $R \in \mathcal{R}$, and each $i \in N,$ $C_i(R_i, O_i^{\varphi}(R_{-i}))=\varphi_i(R).$

Next, we show that domination between rules can easily be translated into the inclusion of option sets.

\begin{lemma}\label{options}
Let $\varphi$ and  $\psi$ be two rules in $\Phi^{\star}$. Then,  $\varphi \succcurlyeq \psi$ if and only if for each  $i \in N$ and each $R_{-i} \in \mathcal{R}^{n-1},$ $O_i^{\psi}(R_{-i}) \subseteq O_i^{\varphi}(R_{-i}).$
\end{lemma}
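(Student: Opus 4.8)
The plan is to translate the welfare relation $\varphi \succcurlyeq \psi$ into an inclusion of option sets by exploiting the choice-theoretic description of rules in $\Phi^\star$ given by Lemma \ref{choiceyoption}. Two facts from that lemma do all the work: for a rule in $\Phi^\star$, the option set $O_i^\varphi(R_{-i})$ is a box (Part (ii)), and the rule's outcome is precisely the unique best point of that option set for the declared preference, i.e. $C_i(R_i,O_i^{\varphi}(R_{-i}))=\{\varphi_i(R)\}$ (Part (iii)). I will use both repeatedly, and the same for $\psi$.

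I would first dispatch the easier implication $(\Leftarrow)$. Assume $O_i^{\psi}(R_{-i}) \subseteq O_i^{\varphi}(R_{-i})$ for each $i \in N$ and each $R_{-i} \in \mathcal{R}^{n-1}$. Fix $R \in \mathcal{R}^{n}$ and $i \in N$. By Lemma \ref{choiceyoption}(iii), $\psi_i(R) \in O_i^{\psi}(R_{-i})$, so by the assumed inclusion $\psi_i(R) \in O_i^{\varphi}(R_{-i})$. Since $\varphi_i(R)$ is a most preferred point of $O_i^{\varphi}(R_{-i})$ according to $R_i$, and $\psi_i(R)$ is one of the available points in that set, we get $\varphi_i(R) \ R_i \ \psi_i(R)$. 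As $R$ and $i$ were arbitrary, $\varphi \succcurlyeq \psi$.

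For the converse $(\Rightarrow)$, assume $\varphi \succcurlyeq \psi$ and fix $i \in N$, $R_{-i} \in \mathcal{R}^{n-1}$, and a point $x_i \in O_i^{\psi}(R_{-i})$; the goal is $x_i \in O_i^{\varphi}(R_{-i})$. The crucial step is a \emph{realizability} argument: I would pick any $R_i \in \mathcal{R}$ with $p(R_i)=x_i$. Because $x_i$ lies in the box $O_i^{\psi}(R_{-i})$ and the peak is the unique globally most preferred consumption of $R_i$ in $X$ (immediate from Definition \ref{defunimul}, taking $\widehat{x}_i=x_i$), the point $x_i$ is the unique best point of $O_i^{\psi}(R_{-i})$, so $\psi_i(R_i,R_{-i})=x_i$ by Lemma \ref{choiceyoption}(iii). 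Domination now gives $\varphi_i(R_i,R_{-i}) \ R_i \ \psi_i(R_i,R_{-i}) = x_i = p(R_i)$; since $p(R_i)$ is strictly preferred to every other consumption in $X$, any point weakly preferred to $p(R_i)$ must equal $p(R_i)$, whence $\varphi_i(R_i,R_{-i})=x_i$. Therefore $x_i \in O_i^{\varphi}(R_{-i})$, and since $x_i$ was arbitrary the inclusion follows.

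I expect the main obstacle to be exactly this realizability step in the forward direction: converting the mere set-membership $x_i \in O_i^{\psi}(R_{-i})$ into an honest equality $\psi_i(R_i,R_{-i})=x_i$ for a concrete profile. The device that resolves it is to peak a preference at the target point $x_i$, which simultaneously pins down $\psi$ at $x_i$ (because the peak, when feasible in the option box, is chosen) and, via domination together with the strict optimality of the peak, pins down $\varphi$ at $x_i$ as well. The reverse direction is routine once the choice-set characterization is in hand.
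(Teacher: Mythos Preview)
Your proof is correct and follows essentially the same approach as the paper's own proof: in the forward direction you peak a preference at the target point $x_i$, use Lemma \ref{choiceyoption}(iii) to conclude $\psi_i(R)=x_i$, and then use domination plus strict optimality of the peak to force $\varphi_i(R)=x_i$; in the backward direction you use that $\psi_i(R)$ lies in $O_i^{\psi}(R_{-i})\subseteq O_i^{\varphi}(R_{-i})$ together with the optimality of $\varphi_i(R)$ on $O_i^{\varphi}(R_{-i})$. The only cosmetic difference is that the paper invokes Lemma \ref{choiceyoption}(i) rather than (iii) for the backward direction, but the content is identical.
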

\begin{proof}
($\Longrightarrow$) Let $i \in N$ and $R_{-i} \in \mathcal{R}^{n-1}.$ Assume $x_i \in O_i^{\psi}(R_{-i})$. Take $R_i \in \mathcal{R}$ such that $p(R_i)=x_i.$ By Lemma \ref{choiceyoption} (iii), $\psi_i(R)=x_i.$ As $\varphi \succcurlyeq \psi,$ we have  $\varphi_i(R) \ R_i \ \psi_i(R)=p(R_i)$ and, therefore, $\varphi_i(R)=x_i$. Thus, $x_i \in O_i^{\varphi}(R_{-i}).$

\noindent
($\Longleftarrow$)  Let $i \in N$ and $R \in \mathcal{R}^{n}.$ By Lemma \ref{choiceyoption} (i), $\varphi_i(R) \in C_i(R_i, O_i^{\varphi}(R_{-i}))$ and  $\psi_i(R) \in C_i(R_i, O_i^{\psi}(R_{-i})).$ As $O_i^{\psi}(R_{-i}) \subseteq O_i^{\varphi}(R_{-i}),$ $\varphi_i(R) \ R_i \ \psi_i(R).$
\end{proof}

\subsection{Proof of Theorem \ref{main}}\label{prueba teo 1}

Before proving the theorem, we need some technical results. Lemmata \ref{Auxiliar peaks} and \ref{egregium} are used to prove that every \emph{strategy-proof,} \emph{unanimous} and \emph{multidimensional replacement monotonic} rule is \emph{own-peak-only} (Lemma \ref{peakonly}). The proof of Theorem \ref{main} then follows from Lemma \ref{peakonly} and the results in \ref{resultados previos}. 

\begin{lemma}\label{Auxiliar peaks} Let $\varphi$ be a   \emph{strategy-proof} and \emph{non-bossy} rule. Then, for each  $R \in \mathcal{R}^{n},$ each $S\subseteq N,$ and each  $R^{\star}_S \in \mathcal{R}^{|S|}$ such that $p(R_j^{\star})=\varphi_j(R)$ for each $j \in S$, we have $\varphi_j\left(R_S^{\star}, R_{-S}\right)=\varphi_j(R).$
\end{lemma}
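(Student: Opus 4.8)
The plan is to change the preferences of the agents in $S$ one at a time, showing at each step that the \emph{entire} allocation is preserved, and then to iterate. The crux is a single-agent observation that combines \emph{strategy-proofness} with the fact that, under multidimensional single-peakedness, the peak is the \emph{unique} best element of $X$: by Definition \ref{defunimul}, for any $\widetilde{x}_i \neq p(R_i)$ the vector $p(R_i)$ is (trivially) between $p(R_i)$ and $\widetilde{x}_i$, so $p(R_i) \ P_i \ \widetilde{x}_i$; hence $x_i \ R_i \ p(R_i)$ forces $x_i = p(R_i)$.

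First I would prove the one-agent version: fix $i \in N$ with $p(R_i^{\star}) = \varphi_i(R)$ and show $\varphi(R_i^{\star}, R_{-i}) = \varphi(R)$. Applying \emph{strategy-proofness} to agent $i$ at the profile $(R_i^{\star}, R_{-i})$ against the report $R_i$ gives
$$\varphi_i(R_i^{\star}, R_{-i}) \ R_i^{\star} \ \varphi_i(R_i, R_{-i}) = \varphi_i(R) = p(R_i^{\star}).$$
Since $p(R_i^{\star})$ is the unique $R_i^{\star}$-maximal element of $X$, this forces $\varphi_i(R_i^{\star}, R_{-i}) = \varphi_i(R)$. As the assignment of agent $i$ is then unchanged, \emph{non-bossiness} yields $\varphi(R_i^{\star}, R_{-i}) = \varphi(R)$.

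Next I would iterate over $S$. Enumerate $S = \{i_1, \dots, i_k\}$ and define $R^{(0)} \equiv R$ together with $R^{(m)} \equiv (R_{i_1}^{\star}, \dots, R_{i_m}^{\star}, R_{-\{i_1, \dots, i_m\}})$. Arguing by induction on $m$, suppose $\varphi(R^{(m)}) = \varphi(R)$. Because $i_{m+1} \notin \{i_1, \dots, i_m\}$, the profile $R^{(m)}$ still carries the original $R_{i_{m+1}}$ in coordinate $i_{m+1}$, so replacing it with $R_{i_{m+1}}^{\star}$ produces exactly $R^{(m+1)}$. Moreover $p(R_{i_{m+1}}^{\star}) = \varphi_{i_{m+1}}(R) = \varphi_{i_{m+1}}(R^{(m)})$ by the induction hypothesis, so the one-agent argument applies verbatim with $R^{(m)}$ in place of $R$, giving $\varphi(R^{(m+1)}) = \varphi(R^{(m)}) = \varphi(R)$. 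After $k$ steps, $\varphi(R_S^{\star}, R_{-S}) = \varphi(R^{(k)}) = \varphi(R)$, which is the desired conclusion.

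I expect no serious obstacle here. The only genuine subtlety is the uniqueness-of-the-peak step, which must be drawn directly from Definition \ref{defunimul} rather than assumed, and the bookkeeping needed to confirm that each coordinate swap is a bona fide single-agent deviation from the immediately preceding profile, so that \emph{strategy-proofness} and \emph{non-bossiness} can both be invoked cleanly at every stage of the induction.
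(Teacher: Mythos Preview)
Your proposal is correct and follows essentially the same approach as the paper: establish the single-agent step via \emph{strategy-proofness} (using that the peak is the unique best bundle), extend to the full allocation by \emph{non-bossiness}, and then iterate one agent at a time over $S$. Your write-up is in fact more careful than the paper's, which compresses the same argument into a few lines and leaves the uniqueness-of-the-peak reasoning and the induction bookkeeping implicit.
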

\begin{proof} Let $R \in \mathcal{R}^{n},$ $S \subseteq N,$ $j \in S$, and  $R^{\star}_S \in \mathcal{R}^{|S|}$ be such that $p(R_j^{\star})=\varphi_j(R)$ for each $j \in S$. Let $i \in S$. Since $\varphi$ is \emph{strategy-proof,} $\varphi_i(R_i^{\star}, R_{-i})=\varphi_i(R)$ (otherwise agent $i$ gets her peak in economy $(R_i^{\star}, R_{-i})$ by  declaring $R_i$). By \emph{non-bossiness,} $\varphi(R_i^{\star}, R_{-i})=\varphi(R).$  Let $k \in S\setminus \{ i\}$. Then, by  \emph{strategy-proofness,} $\varphi_k(R_{i,k}^{\star}, R_{-i,k})=\varphi_k(R_i^{\star}, R_{-i})$ and, by \emph{non-bossiness,} $\varphi(R_{i,k}^{\star}, R_{-i,k})=\varphi(R_i^{\star}, R_{-i})=\varphi(R).$ Continuing in the same fashion, changing the preference of each remaining agent in $S$ one at a time, the result follows.
\end{proof}

Given $i \in N,$ preferences $R_i, \widetilde{R}_i \in \mathcal{R}$ and $x_i \in X,$ define $\boldsymbol{L(R_i, \widetilde{R}_i, x_i)}$  as the set of  
$\ell \in L$ that satisfy one of the following: 
\begin{enumerate}[(i)]
    \item $p^\ell(R_i) < x_i^\ell \ \text{and} \ p^\ell(\widetilde{R}_i) \leq x_i^\ell,$
    \item $p^\ell(R_i) > x_i^\ell \ \text{and} \ p^\ell(\widetilde{R}_i) \geq x_i^\ell,$
    \item $p^\ell(R_i)= p^\ell(\widetilde{R}_i)=x_i^\ell.$
\end{enumerate}

\begin{lemma}\label{egregium}
Let  $\varphi$ be a   \emph{strategy-proof,} \emph{unanimous} and \emph{multidimensional replacement monotonic} rule. Let $R \in \mathcal{R}^{n},$ $i \in N,$ and $j \in N\setminus\{i\}$. Then, $$\varphi_i^\ell(\widetilde{R}_i, R_{j}^{\star}, R_{-i,j})=\varphi_i^\ell(R)$$
for each $\widetilde{R}_i \in \mathcal{R},$ each $\ell \in L(R_i, \widetilde{R}_i,\varphi_i(R)),$ and each $R_j^{\star} \in \mathcal{R}$ such that $p(R_j^{\star})=\varphi_j(R).$
\end{lemma}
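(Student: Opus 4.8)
The plan is to first collapse the simultaneous change of two preferences into a single one-agent change, and then analyze that change coordinate by coordinate. Since $\varphi$ is \emph{multidimensional replacement monotonic}, it is \emph{non-bossy} by Remark \ref{remnonbossy}, so Lemma \ref{Auxiliar peaks} applies with $S=\{j\}$: because $p(R_j^{\star})=\varphi_j(R)$, it gives $\varphi(R_j^{\star},R_{-j})=\varphi(R)$, and in particular $\varphi_i(R_j^{\star},R_{-j})=\varphi_i(R)=:x_i$. Hence it suffices to compare the profiles $R':=(R_i,R_j^{\star},R_{-i,j})$ and $R'':=(\widetilde{R}_i,R_j^{\star},R_{-i,j})$, which differ only in agent $i$'s report, and to show $\varphi_i^\ell(R'')=x_i^\ell$ for each $\ell\in L(R_i,\widetilde{R}_i,x_i)$. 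The reason for anchoring $j$ is recorded here: under $R'$ agent $j$ receives exactly her peak, $\varphi_j(R')=p(R_j^{\star})$, a fact the argument will exploit.

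Writing $y:=\varphi_i(R'')$, \emph{strategy-proofness} applied in both directions to agent $i$ yields the self-selection inequalities $x_i\ R_i\ y$ and $y\ \widetilde{R}_i\ x_i$. Next I would extract the available efficiency and solidarity structure: by Lemma \ref{SS} (using \emph{strategy-proofness}, \emph{unanimity} and \emph{non-bossiness}) the rule is \emph{same-sided}, and \emph{multidimensional replacement monotonicity} links, coordinate by coordinate, the direction in which $\varphi_i^\ell$ moves along the transition $R'\to R''$ to the opposite direction for every other agent, in particular for $j$. Because $j$ sits at her peak in $R'$, \emph{same-sidedness} restricts how $\varphi_j^\ell$ may move in $R''$, which in turn pins down the admissible sign of $y^\ell-x_i^\ell$. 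I would then split into the three defining cases of $L(R_i,\widetilde{R}_i,x_i)$: in case (iii) both peaks coincide with $x_i^\ell$ and the conclusion is immediate; cases (i) and (ii) are symmetric, so it is enough to treat (i), where $p^\ell(R_i)<x_i^\ell$ and $p^\ell(\widetilde{R}_i)\le x_i^\ell$.

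To finish case (i) I would rule out $y^\ell\neq x_i^\ell$ by contradiction. If $y^\ell$ were to move off $x_i^\ell$, I would modify $y$ (resp.\ $x_i$) in coordinate $\ell$ only and invoke \cite{amoros2002single}'s Lemma \ref{pref} to produce a single-peaked preference with the prescribed peak under which the modified bundle is strictly preferred to the original one; combined with the self-selection inequalities $x_i\ R_i\ y$ and $y\ \widetilde{R}_i\ x_i$, and with the sign restriction obtained from \emph{same-sidedness} and \emph{multidimensional replacement monotonicity}, this yields the desired contradiction. The main obstacle is exactly that multidimensional single-peakedness is a \emph{joint} betweenness condition over all coordinates (Definition \ref{defunimul}), so coordinate-wise monotonicity of agent $i$'s allotment cannot simply be read off from \emph{strategy-proofness}: the option set $O_i^{\varphi}(R_j^{\star},R_{-i,j})$ is not yet known to be a box, since Lemma \ref{choiceyoption}(ii) requires \emph{own-peak-onliness}, which is precisely what this lemma is being built to establish. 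Decoupling the target coordinate from the others---controlling the movement in coordinate $\ell$ while the remaining coordinates are free to adjust---is the delicate step, and it is where the interplay of \emph{multidimensional replacement monotonicity} (tying agent $i$'s change to the peak-anchored agent $j$) with \emph{same-sidedness} and Amorós's lemma does the real work.
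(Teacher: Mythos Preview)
Your reduction via Lemma~\ref{Auxiliar peaks} to comparing $R'=(R_i,R_j^\star,R_{-i,j})$ and $R''=(\widetilde{R}_i,R_j^\star,R_{-i,j})$ is correct, and so is the observation that $\varphi_j(R')=p(R_j^\star)$. The gap is in the last paragraph. Anchoring \emph{only} agent~$j$ does not give enough leverage to pin down $y^\ell=\varphi_i^\ell(R'')$. Concretely, take case~(i) and suppose $y^\ell>x_i^\ell$. Then \emph{same-sidedness} forces $\sum p^\ell<\Omega^\ell$ at $R''$, and combining it with \emph{multidimensional replacement monotonicity} yields $\varphi_j^\ell(R'')=p^\ell(R_j^\star)$; but for every other $k\neq i,j$ you only get $p^\ell(R_k)\le\varphi_k^\ell(R'')\le\varphi_k^\ell(R')$, which is perfectly compatible with $y^\ell>x_i^\ell$ by feasibility. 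So the ``sign restriction'' you announce does not materialize. Your proposed finish---invoke Lemma~\ref{pref} to produce a preference with the same peak as $R_i$ (or $\widetilde{R}_i$) under which $y$ beats $x_i$ (or vice versa) and combine with the self-selection inequalities---cannot close the loop either: Lemma~\ref{pref} hands you a \emph{new} preference $R_i'$ with the same peak, but you do not know $\varphi_i(R_i',R_j^\star,R_{-i,j})=x_i$ because \emph{own-peak-onliness} is precisely what you lack. The self-selection inequalities $x_i\,R_i\,y$ and $y\,\widetilde{R}_i\,x_i$ are for the fixed $R_i,\widetilde{R}_i$, not for $R_i'$.

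The paper avoids this trap by anchoring \emph{all} agents $k\neq i$ at $p(R_k^\star)=\varphi_k(R)$ first. With every other agent sitting at her allocation-peak, feasibility makes \emph{same-sidedness} bite hard enough to prove (Step~1) a local own-peak-onliness at the fully anchored profile: $\varphi(R_i',R_{-i}^\star)=\varphi(R)$ for every $R_i'$ with $p(R_i')=p(R_i)$. This is exactly the substitute for \emph{own-peak-onliness} that lets Lemma~\ref{pref} produce a genuine strategy-proofness violation (Step~2). Only then does the paper \emph{un-anchor} the agents $k\neq i,j$ one at a time (Step~3), using \emph{multidimensional replacement monotonicity} at each step. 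The missing idea in your sketch is this ``anchor all, then un-anchor'' detour; anchoring a single agent~$j$ from the outset does not deliver the partial own-peak-onliness you need.
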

\begin{proof} Let $\varphi$ be a rule that satisfies the properties listed in the lemma.  By Lemma \ref{SS}, $\varphi$ is \emph{same-sided,} and by Remark \ref{remnonbossy}, $\varphi$ is  \emph{non-bossy.} Let $R \in \mathcal{R}^{n},$  $i \in N$, and consider the profile $R_{-i}^{\star} \in \mathcal{R}^{n-1}$ such that $p(R_j^{\star})=\varphi_j(R)$ for each $j \in N\setminus \{i\}.$ We will prove the lemma in several steps.

\medskip 

\noindent \textbf{Step 1: For each $\boldsymbol{i \in N}$ and each $\boldsymbol{R_i' \in \mathcal{R}}$ such that $\boldsymbol{p(R'_i)=p(R_i)},$  $$\boldsymbol{\varphi(R_i', R^{\star}_{-i})=\varphi(R).}$$}
By Lemma \ref{Auxiliar peaks} and \emph{non-bossiness,} it is sufficient to see that  $\varphi_i(R_i', R_{-i}^{\star})=\varphi_i(R_i, R_{-i}^{\star}).$   
Let $\ell \in L.$ If  $\varphi_i^\ell(R_i', R_{-i}^{\star}) \leq p^\ell(R_i'),$ by \emph{same-sidedness} and Lemma \ref{Auxiliar peaks} we have  $\varphi_j^\ell(R_i', R_{-i}^{\star}) \leq p(R_{j}^{\star})=\varphi_j^\ell(R_i, R_{-i}^{\star})$ for each $j \in N\setminus \{i\}.$
Therefore, $\varphi_i^\ell(R_i', R_{-i}^{\star})=\Omega^\ell-\sum_{j \in N\setminus \{i\}} \varphi_j^\ell(R_i', R_{-i}^{\star}) \geq \Omega^\ell - \sum_{j \in N\setminus \{i\}} \varphi_j^\ell(R_i,R_{-i}^{\star})=\varphi_i^\ell(R_i, R_{-i}^{\star}).$ In consequence, as  $p^\ell(R_i')=p^\ell(R_i),$
\begin{equation}\label{equuno}
\varphi_i^\ell(R_i, R_{-i}^{\star}) \leq \varphi_i^\ell(R_i', R_{-i}^{\star}) \leq p^\ell(R_i).
\end{equation}
Analogously, we can show that if $\varphi_i^\ell(R_i',R_{-i}^{\star}) \geq p^\ell(R_i'),$ then
\begin{equation}\label{equdo}
\varphi_i^\ell(R_i, R_{-i}^{\star}) \geq \varphi_i^\ell(R_i', R_{-i}^{\star}) \geq p^\ell(R_i).
\end{equation}
Since both  (\ref{equuno}) and (\ref{equdo}) are true for each $\ell \in L,$ if  $\varphi_i(R_i',R_{-i}^{\star})\neq\varphi_i(R_i, R_{-i}^{\star})$ then  $$\varphi_i(R_i',R_{-i}^{\star}) \ P_i \ \varphi_i(R_i, R_{-i}^{\star}),$$  violating the  \emph{strategy-proofness} of $\varphi.$ Hence $\varphi_i(R_i',R_{-i}^{\star})= \varphi_i(R_i, R_{-i}^{\star})$, as desired. 

\medskip

\noindent \textbf{Step 2: For each $\boldsymbol{i \in N},$ each $\boldsymbol{\widetilde{R}_i \in \mathcal{R}}$, and each $\boldsymbol{\ell \in L(R_i, \widetilde{R}_i,\varphi_i(R))},$ $$\boldsymbol{\varphi_i^\ell(\widetilde{R}_i, R_{-i}^{\star})=\varphi_i^\ell(R).}$$} Notice that, by Lemma \ref{Auxiliar peaks}, it is sufficient to see that $\varphi_i^\ell(\widetilde{R}_i, R_{-i}^{\star})=\varphi_i^\ell(R_i, R_{-i}^{\star}).$ Assume this is not true. Let us analyze the case in which   $\ell \in L$  is such that $p^\ell(\widetilde{R}_i)+\sum_{j \in N\setminus \{i\}}p^\ell(R_j^{\star})\leq\Omega^\ell,$ since an analogous reasoning applies to the symmetric case. We have that $p^\ell(\widetilde{R}_i)\leq \Omega^\ell-\sum_{j \in N\setminus \{i\}}p^\ell(R_j^{\star})=\varphi_i^\ell(R_i, R_{-i}^{\star}),$ since, for each $j \in N\setminus \{i\},$ $p^\ell(R_j^{\star})=\varphi_j^\ell(R_i, R_{-i}^{\star}).$ 
Therefore, as $\varphi_i^\ell(R_i, R_{-i}^{\star})=\varphi_i^\ell(R)$ and $\ell \in L(R_i, \widetilde{R}_i,\varphi_i(R)),$ we have
\begin{equation}\label{eq lemma 9 - 1}
    p^\ell(R_i) < \varphi_i^\ell(R_i, R_{-i}^{\star}).
\end{equation}
By  \emph{same-sidedness,}  $\varphi_j^\ell(R_i, R_{-i}^{\star})\geq p^\ell(R_j^{\star})$ for each $j \in N\setminus\{i\}$. Then,  $\sum_{j \in N \setminus\{i\}}\varphi_j^\ell(\widetilde{R}_i, R_{-i}^{\star})\geq \sum_{j \in N \setminus\{i\}}\varphi_j^\ell(R_i, R_{-i}^{\star})$ and, by feasibility and the fact that $\varphi_i^\ell(\widetilde{R}_i, R_{-i}^{\star})\neq \varphi_i^\ell(R_i, R_{-i}^{\star}),$
\begin{equation}\label{eq lemma 9 - 2}
    \varphi_i^\ell(\widetilde{R}_i, R_{-i}^{\star})<\varphi_i^\ell(R_i, R_{-i}^{\star}).
\end{equation}
In consequence, by \eqref{eq lemma 9 - 1}, \eqref{eq lemma 9 - 2}, and  Lemma \ref{pref}, there is $R_i' \in \mathcal{R}$ with $p(R_i')=p(R_i)$ such that $\varphi_i^\ell(\widetilde{R}_i, R_{-i}^{\star})P_i'\varphi_i^\ell(R_i, R_{-i}^{\star}).$ By Step 1, $\varphi_i^\ell(R_i', R_{-i}^{\star})=\varphi_i^\ell(R_i, R_{-i}^{\star}).$ Thus,    $\varphi_i^\ell(\widetilde{R}_i, R_{-i}^{\star})P_i'\varphi_i^\ell(R_i', R_{-i}^{\star}),$ contradicting the \emph{strategy-proofness} of $\varphi.$ Hence, $\varphi_i^\ell(\widetilde{R}_i, R_{-i}^{\star})=\varphi_i^\ell(R_i, R_{-i}^{\star})$ and the result follows.

\medskip

\noindent \textbf{Step 3: For each  $\boldsymbol{i \in N},$ each $\boldsymbol{\widetilde{R}_i \in \mathcal{R}},$ each $\boldsymbol{\ell \in L(R_i, \widetilde{R}_i,\varphi_i(R))},$ each $\boldsymbol{j \in N\setminus\{i\}}$, and each $\boldsymbol{R_j^{\star} \in \mathcal{R}}$ such that $\boldsymbol{p(R_j^{\star})=\varphi_j(R)},$  
$$\boldsymbol{\varphi_i^\ell(\widetilde{R}_i, R_{j}^{\star}, R_{-i,j})=\varphi_i^\ell(R)}.$$}
By Step 2, we only need to show that $\varphi_i^\ell(\widetilde{R}_i, R_{j}^{\star}, R_{-i,j})=\varphi_i^\ell(\widetilde{R}_i, R_{-i}^{\star})$. First, we show that $\varphi_i^\ell(\widetilde{R}_i, R_{-i,j}^{\star}, R_{j})=\varphi_i^\ell(\widetilde{R}_i, R_{-i}^{\star}).$ By  \emph{non-bossiness,} it is sufficient to see that $\varphi_j^\ell(\widetilde{R}_i, R_{-i,j}^{\star}, R_{j})=\varphi_j^\ell(\widetilde{R}_i, R_{-i}^{\star}).$ Assume, without loss of generality, that $\varphi_j^\ell(\widetilde{R}_i, R_{-i,j}^{\star}, R_{j})<\varphi_j^\ell(\widetilde{R}_i, R_{-i}^{\star}).$ This means, by  \emph{multidimensional replacement monotonicity,} that
\begin{equation}\label{NN1}
\varphi_k^\ell(\widetilde{R}_i, R_{-i,j}^{\star}, R_{j})\geq \varphi_k^\ell(\widetilde{R}_i, R_{-i}^{\star}) \text{ for each } k \in N\setminus\{j\}.
\end{equation}
If  $\varphi_i^\ell(\widetilde{R}_i, R_{-i,j}^{\star}, R_{j})=\varphi_i^\ell(\widetilde{R}_i, R_{-i}^{\star})$ we get the result. If not, since from  Lemma \ref{Auxiliar peaks} and Step 2, $\varphi_i^\ell(\widetilde{R}_i, R_{-i}^{\star})=\varphi_i^\ell( R)=\varphi_i^\ell(R_{-i,j}^{\star}, R_{i,j}),$ we have $\varphi_i^\ell(\widetilde{R}_i, R_{-i,j}^{\star}, R_{j})>\varphi_i^\ell(R_{-i,j}^{\star}, R_{i,j})$ which implies, by  \emph{multidimensional replacement monotonicity,} that
\begin{equation}\label{NN2}
\varphi_k^\ell(\widetilde{R}_i, R_{-i,j}^{\star}, R_{j})\leq \varphi_k^\ell(R_{-i,j}^{\star}, R_{i,j}) \text{ para cada }k \in N\setminus \{i,j\}.
\end{equation}
Take $k \in N\setminus\{i,j\}.$ By Lemma  \ref{Auxiliar peaks}, Step 2, and   \emph{multidimensional replacement monotonicity,}  $\varphi_k^\ell(R_{-i,j}^{\star}, R_{i,j})=\varphi_k^\ell(R)=\varphi_k^\ell(\widetilde{R}_i, R_{-i}^{\star}).$ In consequence, by  (\ref{NN1}) and (\ref{NN2}) we have $\varphi_k^\ell(\widetilde{R}_i, R_{-i,j}^{\star}, R_{j})=\varphi_k^\ell(\widetilde{R}_i, R_{-i}^{\star}).$ By  \emph{multidimensional replacement monotonicity,} $\varphi_j^\ell(\widetilde{R}_i, R_{-i,j}^{\star}, R_{j})=\varphi_j^\ell(\widetilde{R}_i, R_{-i}^{\star}),$ a contradiction. Continuing in the same fashion, changing one preference at a time, we can prove that for each  $S\subseteq N\setminus\{i\},$  $\varphi_j^\ell(\widetilde{R}_i, R_{-S}^{\star}, R_{S})=\varphi_j^\ell(\widetilde{R}_i, R_{-i}^{\star}).$  We get the result considering   $S=N\setminus \{i,j\}.$
\end{proof}

\begin{lemma}\label{peakonly}
Let $\varphi$ be a   \emph{strategy-proof,} \emph{unanimous} and \emph{multidimensional replacement monotonic} rule. Then   $\varphi$ is \emph{own-peak-only.}
\end{lemma}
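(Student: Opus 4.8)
The plan is to fix a profile $R$, an agent $i$, and a preference $R_i'$ with $p(R_i')=p(R_i)$, write $x=\varphi(R)$, and show that $\varphi_i(R_i',R_{-i})=x_i$. Before starting I will record two consequences of the hypotheses that I use repeatedly: by Remark~\ref{remnonbossy} the rule is \emph{non-bossy}, and hence by Lemma~\ref{SS} it is \emph{same-sided}.

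The first move is to bring Lemma~\ref{egregium} to bear with $\widetilde{R}_i=R_i'$. The key point is that the peak coincidence $p(R_i')=p(R_i)$ makes the index set $L(R_i,R_i',\varphi_i(R))$ equal to all of $L$: at each coordinate $\ell$ exactly one of $p^\ell(R_i)<x_i^\ell$, $p^\ell(R_i)>x_i^\ell$, $p^\ell(R_i)=x_i^\ell$ holds, and since $p^\ell(R_i')=p^\ell(R_i)$ this places $\ell$ in case (i), (ii) or (iii) of the definition of $L$, respectively. Consequently, for every $j\in N\setminus\{i\}$ and every $R_j^\star$ with $p(R_j^\star)=\varphi_j(R)$, Lemma~\ref{egregium} gives $\varphi_i(R_i',R_j^\star,R_{-i,j})=\varphi_i(R)=x_i$. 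Applying Lemma~\ref{Auxiliar peaks} to $R$ with $S=\{j\}$ yields $\varphi(R_i,R_j^\star,R_{-i,j})=x$, so agent $i$ receives $x_i$ both before and after replacing $R_i$ by $R_i'$; \emph{non-bossiness} then upgrades this coordinatewise equality to $\varphi(R_i',R_j^\star,R_{-i,j})=x$.

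It remains to ``un-star'' agent $j$, i.e. to pass from $(R_i',R_j^\star,R_{-i,j})$---where the single perturbed agent declares a peak sitting exactly at her assignment---to the target profile $(R_i',R_j,R_{-i,j})=(R_i',R_{-i})$ while keeping agent $i$'s bundle at $x_i$. This is the step I expect to be the main obstacle, since Lemma~\ref{egregium} always leaves one auxiliary agent starred, and \emph{strategy-proofness} alone does not fix $\varphi_j$ under this change (the true peak $p(R_j)$ differs from $\varphi_j(R)$). The plan here is a coordinatewise squeeze: fixing $\ell$ and comparing the three profiles $(R_i',R_j^\star,R_{-i,j})$, $(R_i',R_{-i})$ and $R$, I apply \emph{multidimensional replacement monotonicity} first to the change in agent $j$ and then to the change in agent $i$ to force $\varphi_k^\ell=x_k^\ell$ for every $k\neq i,j$, after which feasibility links $\varphi_i^\ell(R_i',R_{-i})$ to $\varphi_j^\ell(R_i',R_{-i})$. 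Using \emph{same-sidedness} to locate these amounts relative to the common peak $p(R_i)$, a strict movement of $\varphi_i^\ell$ would, through Lemma~\ref{pref}, produce a same-peak preference under which agent $i$ strictly prefers her new bundle, contradicting \emph{strategy-proofness}. Excluding such movements at every coordinate gives $\varphi_i(R_i',R_{-i})=x_i$, which is exactly \emph{own-peak-onliness}. In effect this last paragraph re-runs the mechanism of Steps~2--3 in the proof of Lemma~\ref{egregium}, now in the configuration where the untouched agents carry their original preferences rather than starred ones.
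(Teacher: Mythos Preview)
Your argument through $\varphi(R_i',R_j^{\star},R_{-i,j})=x$ is correct, and the observation that $L(R_i,R_i',\varphi_i(R))=L$ whenever $p(R_i')=p(R_i)$ is exactly the right shortcut into Lemma~\ref{egregium}. The squeeze you describe also works as stated: comparing $B=(R_i',R_{-i})$ with both $A=(R_i',R_j^{\star},R_{-i,j})$ and $C=R$ via \emph{multidimensional replacement monotonicity} does force $\varphi_k^\ell(B)=x_k^\ell$ for every $k\neq i,j$ and links $\varphi_i^\ell(B)-x_i^\ell=-(\varphi_j^\ell(B)-x_j^\ell)$. The problem is the closing move. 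To turn ``there is a same-peak $\hat R_i$ with $\varphi_i(B)\,\hat P_i\,x_i$'' (or the reverse) into a violation of \emph{strategy-proofness}, you need to know that $\varphi_i(\hat R_i,R_{-i})$ coincides with one of $x_i,\varphi_i(B)$; but that identification is precisely \emph{own-peak-onliness}, the statement you are proving. The only non-circular appeals to \emph{strategy-proofness} give the weak relations $x_i\,R_i\,\varphi_i(B)$ and $\varphi_i(B)\,R_i'\,x_i$, which do not exclude a ``mixed'' configuration where at some coordinates $\varphi_i^\ell(B)$ sits between $p^\ell(R_i)$ and $x_i^\ell$ and at others the order is reversed. ``Re-running Steps~2--3 of Lemma~\ref{egregium}'' does not help either: Step~2 there relies on Step~1, which is \emph{own-peak-onliness restricted to the fully starred profile $R_{-i}^{\star}$}, provable because peaks-at-allocations makes \emph{same-sidedness} bite; and Step~3 never un-stars the last auxiliary agent.

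The paper closes the gap with a different use of Lemma~\ref{egregium}. Assuming $\varphi_i^\ell(R_i',R_{-i})\neq\varphi_i^\ell(R)$ at some $\ell$ and choosing $j$ whose $\ell$-assignment moves the opposite way, it sets $p(R_i^{\star})=\varphi_i(R_i',R_{-i})$ and $p(R_j^{\star})=\varphi_j(R)$ and applies Lemma~\ref{egregium} \emph{twice}: once at the profile $R$ (agent $i$, $\widetilde R_i=R_i^{\star}$, auxiliary $R_j^{\star}$), and once at the profile $(R_i',R_{-i})$ (agent $j$, $\widetilde R_j=R_j^{\star}$, auxiliary $R_i^{\star}$). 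Both applications land on the \emph{same} intermediate profile $(R_i^{\star},R_j^{\star},R_{-i,j})$, so the first gives $\varphi_i^\ell=\varphi_i^\ell(R)$ there and the second, together with Lemma~\ref{Auxiliar peaks} and coordinatewise replacement monotonicity, forces $\varphi_i^\ell=\varphi_i^\ell(R_i',R_{-i})$ there as well---a contradiction. The key idea you are missing is this symmetric double application, which pins down the intermediate profile from both sides and removes the need to un-star the last agent by hand.
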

\begin{proof}
Let $\varphi$ be a rule that satisfies the properties listed in the lemma. By Lemma \ref{SS}, $\varphi$ is \emph{same-sided,} and by Remark  \ref{remnonbossy}, $\varphi$ is   \emph{non-bossy.} Assume that  $\varphi$ is not   \emph{own-peak-only.} Then, there are  $R \in \mathcal{R}^{n},$ $i \in N,$ $\ell \in L,$ and $R_i'\in \mathcal{R}$ with $p(R'_i)=p(R_i)$ such that, without loss of generality,  
\begin{equation}\label{prueba own-peak-only 1}
    \varphi_i^\ell(R_i',R_{-i})<\varphi_i^\ell(R).
\end{equation}
By  \emph{same-sidedness,} both $\varphi_i^\ell(R_i',R_{-i})$ and $\varphi_i^\ell(R)$ are on the same side of $p^\ell(R_i)$. Assume, again without loss of generality, that $p^\ell(R_i) \leq \varphi_i^\ell(R_i',R_{-i})<\varphi_i^\ell(R).$ By feasibility and  \emph{same-sidedness},  there is  $j \in N\setminus \{i\}$ such that $p^\ell(R_j) \leq \varphi_j^\ell(R)<\varphi_j^\ell(R_i',R_{-i}).$ Let $R_i^{\star} \in \mathcal{R}$ be such that $p(R_i^{\star})=\varphi_{i}(R_i',R_{-i})$ and let $R_j^{\star} \in \mathcal{R}$ be such that $p(R_j^{\star})=\varphi_{i}(R).$ Then, it is easily seen that  $\ell \in L(R_i, R_i^{\star}, \varphi_i(R))$ and $\ell \in L(R_j, R_j^{\star}, \varphi_j(R_i', R_{-i})).$ By Lemma \ref{egregium},  $\varphi_i^\ell(R_{i,j}^{\star}, R_{-i,j})=\varphi_i^\ell(R),$ and   $\varphi_j^\ell(R_{i,j}^{\star}, R_{-i,j})=\varphi_j^\ell(R_i',R_{-i}).$ By Lemma \ref{Auxiliar peaks}, $\varphi(R_i', R_{-i})=\varphi(R_i^{\star},R_{-i}),$ and therefore $\varphi_j^\ell(R_{i,j}^{\star}, R_{-i,j})=\varphi_j^\ell(R_i^{\star},R_{-i}).$ In consequence, \emph{non-bossiness} implies $\varphi_i^\ell(R_{i,j}^{\star}, R_{-i,j})=\varphi_i^\ell(R_i^{\star},R_{-i})=\varphi_i^\ell(R_i',R_{-i}).$ It follows that
$\varphi_i^\ell(R)=\varphi_i^\ell(R_{i,j}^{\star}, R_{-i,j})=\varphi_i^\ell(R_i',R_{-i}).$ This contradicts \eqref{prueba own-peak-only 1}. Hence, $\varphi$ is \emph{own-peak-only}.
\end{proof}

\vspace{15 pt}

\noindent
\emph{Proof of Theorem \ref{main}.} Let $\varphi$ be a rule that satisfies the properties listed in the theorem.  By Lemma \ref{peakonly}, $\varphi$ is  \emph{own-peak-only.}  Let $\psi \in \Phi^{\star}$ be such that $\psi \succcurlyeq \varphi$ and $\psi \not  \neq \varphi.$ Then, by Lemma \ref{options}, for each  $i \in N$ and each $R_{-i} \in \mathcal{R}^{n-1}$ we have $O_i^{\psi}(R_{-i}) \subseteq O_i^{\varphi}(R_{-i}).$ If for each $i \in N$ and each $R_{-i} \in \mathcal{R}^{n-1}$ we have $O_i^{\psi}(R_{-i}) = O_i^{\varphi}(R_{-i}),$ then by Lemma \ref{choiceyoption} (iii)  we have, for each $i \in N$ and each $R \in \mathcal{R}^{n},$ $\psi_i(R)=C_i(R_i,O_i^{\psi}(R_{-i}) )=C_i(R_i,O_i^{\varphi}(R_{-i}) )=\varphi_i(R),$ and thus $\psi=\varphi,$ contradicting our hypothesis. Therefore, there are $i \in N$ and $R_{-i} \in \mathcal{R}^{n-1}$ such that  $O_i^{\psi}(R_{-i}) \subsetneq O_i^{\varphi}(R_{-i}).$ It follows that there is $\widetilde{x}_i \in X$ such that $\widetilde{x}_i \in O_i^{\psi}(R_{-i})$ and $\widetilde{x}_i \notin O_i^{\varphi}(R_{-i}).$ Let $\widetilde{R}_i \in \mathcal{R}$ be such that $p(\widetilde{R}_i)=\widetilde{x}_i.$ Then,   $\widetilde{x}_i=C_i(\widetilde{R}_i,O_i^{\psi}(R_{-i}))=\psi_i(\widetilde{R}_i,R_{-i}).$ Let $y_i \equiv C_i(\widetilde{R}_i,O_i^{\varphi}(R_{-i}))=\varphi_i(\widetilde{R}_i,R_{-i}).$ Since $\widetilde{x}_i \notin O_i^{\varphi}(R_{-i}),$ $y_i \neq \widetilde{x}_i.$ Then, there is  $\ell \in L$ such that, without loss of generality,  $y_i^\ell < \widetilde{x}_i^\ell=p^\ell(\widetilde{R}_i)$. By Lemma   \ref{SS}, $\varphi$ is  \emph{same-sided,} and since  $\varphi_i^\ell(\widetilde{R}_i,R_{-i})<p^\ell(\widetilde{R}_i),$ this implies, for each $j \in N\setminus\{i\},$   $\varphi_j^\ell(\widetilde{R}_i,R_{-i})\leq p^\ell(R_j).$  Being  $\varphi_i^\ell(\widetilde{R}_i,R_{-i})<p^\ell(\widetilde{R}_i)=\psi_i^\ell(\widetilde{R}_i,R_{-i}),$ by feasibility there is $j \in N\setminus\{i\}$ such that $\varphi_j^\ell(\widetilde{R}_i,R_{-i})>\psi_j^\ell(\widetilde{R}_i,R_{-i}).$ Thus,  $$p^\ell(R_j) \geq \varphi_j^\ell(\widetilde{R}_i,R_{-i})>\psi_j^\ell(\widetilde{R}_i,R_{-i}).$$
By Lemma \ref{pref}, there is $\widetilde{R}_j \in \mathcal{R}$ such that $p(\widetilde{R}_j)=p(R_j)$ and $\varphi_j(\widetilde{R}_i,R_{-i}) \ \widetilde{P}_j \ \psi_j(\widetilde{R}_i,R_{-i}).$ Since $\psi$ is \emph{own-peak-only,} we have $$\varphi_j(\widetilde{R}_{i,j}, R_{-i,j}) \ \widetilde{P}_j \ \psi_j(\widetilde{R}_{i,j},R_{-i,j}).$$ But this last statement contradicts that   $\psi \succcurlyeq \varphi.$ Hence, if $\psi \succcurlyeq \varphi$ it follows that $\psi = \varphi$ and thus $\varphi$ is \emph{strongly Pareto-undominated strategy-proof}.  \hfill $\square $

\subsection{Proof of Lemma \ref{secvarios}}\label{prueba lema 3}

The next result presents a property of one-dimensional rules that satisfy  \emph{strategy-proofness} and  \emph{same-sidedness.} We present an easy proof for completeness.

\begin{lemma}\label{uncompromise} Let $\mathcal{R}$ be the one-dimensional single-peaked domain and let $\varphi$ be a \emph{strategy-proof} and \emph{same-sided} rule defined on that domain.  For each $R \in \mathcal{R}^{n},$ $i \in N$ and  $R_i'\in \mathcal{R},$ we have
\begin{enumerate}[(i)]
    \item if $p(R_i) < \varphi_i(R)$ and $p(R_i') \leq \varphi_i(R),$ then   $\varphi_i(R_i', R_{-i})=\varphi_i(R),$

    \item if $p(R_i) > \varphi_i(R)$ and $p(R_i') \geq \varphi_i(R),$ then   $\varphi_i(R_i', R_{-i})=\varphi_i(R).$
\end{enumerate}
\end{lemma}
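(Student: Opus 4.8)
The plan is to establish (i) and obtain (ii) by the symmetric argument (interchanging ``below the peak'' with ``above the peak,'' i.e.\ the over- and under-demand regimes of \emph{same-sidedness}). Throughout write $a \equiv \varphi_i(R)$ and $a' \equiv \varphi_i(R_i',R_{-i})$, and recall the hypotheses $p(R_i) < a$ and $p(R_i') \le a$. The engine will be \emph{strategy-proofness} applied in \emph{both} directions, read through the betweenness characterization of single-peakedness (Definition \ref{defunimul}) in the one-dimensional setting.

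First I would extract the two incentive constraints. For the agent with true preference $R_i$, not gaining by reporting $R_i'$ means $a \, R_i \, a'$; since $p(R_i) < a$, every value in $[p(R_i),a)$ is between $p(R_i)$ and $a$ and hence strictly $R_i$-preferred to $a$, so this forces $a' < p(R_i)$ or $a' \ge a$. For the agent with true preference $R_i'$, not gaining by reporting $R_i$ means $a' \, R_i' \, a$; if $a' > a$ then, as $p(R_i') \le a < a'$, the point $a$ lies between $p(R_i')$ and $a'$ with $a \neq a'$, giving $a \, P_i' \, a'$, a contradiction. Combining, either $a' = a$ (the claim) or we land in the far-side case $a' < p(R_i) < a$.

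The main obstacle is ruling out this far-side case, and this is exactly where \emph{strategy-proofness} of agent $i$ alone does not suffice and \emph{same-sidedness} must enter. I would first note that $\varphi_i(R) = a > p(R_i)$ places $R$ in the under-demand regime, so $\sum_{j} p(R_j) < \Omega$ (strictly, since equality would pin every agent at her peak and force $a = p(R_i)$); consequently any profile obtained from $R$ by changing only $i$'s preference to one with the same peak is again under-demand. Then, because $a$ is not between $p(R_i)$ and $a'$, Lemma \ref{pref} \citep{amoros2002single} supplies a preference $\widetilde{R}_i \in \mathcal{R}$ with $p(\widetilde{R}_i) = p(R_i)$ and $a' \, \widetilde{P}_i \, a$.

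To finish, set $c \equiv \varphi_i(\widetilde{R}_i, R_{-i})$. Since $(\widetilde{R}_i, R_{-i})$ shares the peaks of $R$, it is under-demand, so \emph{same-sidedness} gives $c \ge p(R_i)$. Strategy-proofness for the agent with true preference $\widetilde{R}_i$, who could deviate to $R_i'$ and secure $a'$, yields $c \, \widetilde{R}_i \, a'$; this rules out $c \ge a$, because then $a$ would lie between $p(R_i)$ and $c$, giving $a \, \widetilde{P}_i \, c$ and hence $a' \, \widetilde{P}_i \, c$, contradicting $c \, \widetilde{R}_i \, a'$. Thus $p(R_i) \le c < a$. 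Finally, strategy-proofness for the agent with true preference $R_i$, who could deviate to $\widetilde{R}_i$ and obtain $c$, requires $a \, R_i \, c$; but $c$ lies between $p(R_i)$ and $a$ with $c \neq a$, so $c \, P_i \, a$, the contradiction that closes the far-side case. Hence $a' = a$, and (ii) follows symmetrically.
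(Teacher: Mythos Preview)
Your argument is correct and follows the same overall shape as the paper's: rule out $a'>a$ via the incentive constraint for true preference $R_i'$, and rule out the remaining case by manufacturing a same-peak preference $\widetilde{R}_i$ with $a'\,\widetilde{P}_i\,a$ and deriving a strategy-proofness violation. The genuine difference is in how you close the second case. The paper invokes the known one-dimensional fact that \emph{strategy-proofness} plus \emph{same-sidedness} implies \emph{own-peak-onliness} \citep{Ching94}, which instantly gives $\varphi_i(\widetilde{R}_i,R_{-i})=\varphi_i(R)=a$ and hence the contradiction $a'\,\widetilde{P}_i\,\varphi_i(\widetilde{R}_i,R_{-i})$. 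You instead work directly from the axioms: you use \emph{same-sidedness} to bound $c\equiv\varphi_i(\widetilde{R}_i,R_{-i})\ge p(R_i)$, then squeeze $c$ into $[p(R_i),a)$ via two further incentive constraints, yielding $c\,P_i\,a$. Your route is more elementary and self-contained (no external citation needed), while the paper's is shorter by outsourcing the work to Ching's lemma. One small wrinkle: when you argue that $c\ge a$ is impossible ``because then $a$ would lie between $p(R_i)$ and $c$, giving $a\,\widetilde{P}_i\,c$,'' the strict preference requires $a\neq c$; at $c=a$ you should instead appeal directly to $a'\,\widetilde{P}_i\,a=c$, which already contradicts $c\,\widetilde{R}_i\,a'$. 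This is a cosmetic fix and does not affect the validity of your proof.
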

\begin{proof} Let us check (i), since  (ii) is analogous. Suppose $p(R_i) < \varphi_i(R)$ and  $p(R_i') \leq \varphi_i(R).$ As $\varphi$ is \emph{same-sided,} $p(R_j) \leq \varphi_j(R)$ for each  $j \in N\setminus \{i\}.$ Then $p(R_i')+\sum_{j \in N\setminus \{i\}}p(R_j) \leq \sum_{j \in N} \varphi_j(R)=\Omega.$ Again by \emph{same-sidedness,}  $\varphi_i(R_i',R_{-i})\geq p(R_i').$ Assume  $\varphi_i(R_i',R_{-i}) \neq \varphi_i(R).$ There are two cases to analyze:

\noindent \textbf{1. $\boldsymbol{\varphi_i(R_i',R_{-i}) > \varphi_i(R)}.$} Then $\varphi_i(R_i',R_{-i}) > \varphi_i(R) \geq p(R_i'),$ which contradicts \emph{strategy-proofness} of $\varphi.$

\noindent \textbf{2.  $\boldsymbol{\varphi_i(R_i',R_{-i}) < \varphi_i(R)}.$} It is a well-known fact that a one-dimensional  \emph{strategy-proof} and  \emph{same-sided} rule is  \emph{own-peak-only} \citep[see, for example,][]{Ching94}. 
 Let $\widetilde{R}_i \in \mathcal{R}$ be such that $p(\widetilde{R}_i)=p(R_i)$ and $\varphi_i(R_i', R_{-i})  \ \widetilde{P}_i \ \varphi_i(R).$ As $\varphi$ is \emph{own-peak-only}, $\varphi_i(R)=\varphi_i(\widetilde{R}_i, R_{-i}),$ and therefore $\varphi_i(R_i', R_{-i}) \  \widetilde{P}_i \    \varphi_i(\widetilde{R}_i, R_{-i}),$ contradicting \emph{strategy-proofness.}
\end{proof}

\vspace{15 pt}

\noindent
\emph{Proof of Lemma \ref{secvarios}}. Let $\phi=\left(\phi^1, \ldots, \phi^l \right)$ be a multidimensional sequential rule. By definition of $\phi$, given $R \in \mathcal{R}^n$,  the coordinate function $\phi^\ell$ depends only on $p^\ell(R)$ for each $\ell \in L$. Notice 
 that $\phi$ inherits the properties of  \emph{replacement monotonicity} and \emph{same-sidedness} from each of its coordinate functions. It only remains to be seen that $\phi$ is also  \emph{strategy-proof}. Let be $R \in \mathcal{R}^{n},$ $i \in N$ and $R_i'\in \mathcal{R}.$ We must prove that $\phi_i(R)  \ R_i \  \phi_i(R_i',R_{-i}).$ Take  $\ell \in L$ and assume $\sum_{j \in N} p^\ell(R_j) \geq \Omega^\ell.$ By  \emph{same-sidedness,} 
\begin{equation}\label{multidimensional 1}
\phi_i^\ell(R) \leq p^\ell(R_i).    
\end{equation}
Assume $\phi_i^\ell(R) < p^\ell(R_i).$ If $p^\ell(R_i') \geq \phi_i^\ell(R)$ then, by Lemma \ref{uncompromise}, $\phi_i^\ell(R_i',R_{-i})=\phi_i^\ell(R).$ If $p^\ell(R_i') < \phi_i^\ell(R),$ we have two cases to analyze.

\noindent \textbf{1. $\boldsymbol{p^\ell(R_i')+\sum_{j \in N\setminus \{i\}} p^\ell(R_j) \geq \Omega^\ell}.$} By  \emph{same-sidedness,} $\phi_i^\ell(R_i',R_{-i})\leq p^\ell(R_i') <\phi_i^\ell(R).$

\noindent \textbf{2. $\boldsymbol{p^\ell(R_i')+\sum_{j \in N\setminus \{i\}} p^\ell(R_j) < \Omega^\ell}.$} Assume $\phi_i^\ell(R_i',R_{-i})>\phi_i^\ell(R).$ By feasibility there is $j \in N \setminus \{i\}$ such that  $\phi_j^\ell(R_i',R_{-i}) < \phi_j^\ell(R).$ By  \emph{same-sidedness}, we have $p^\ell(R_j) \leq \phi_j^\ell(R_i',R_{-i}) < \phi_j^\ell(R)\leq p^\ell(R_j),$ which is absurd. In consequence, $\phi_i^\ell(R_i',R_{-i})\leq\phi_i^\ell(R).$

We conclude that, either $\phi_i^\ell(R)=p^\ell(R_i)$ or $\phi_i^\ell(R_i',R_{-i})\leq\phi_i^\ell(R)<p^\ell(R_i).$ With an analogous reasoning we can see that, for  $\ell \in L$ such that $\sum_{j \in N} p^\ell(R_j) < \Omega^\ell,$ we have  $\phi_i^\ell(R)=p^\ell(R_i)$ ó $\phi_i^\ell(R_i',R_{-i})\geq\phi_i^\ell(R)>p^\ell(R_i).$ This implies  $\phi_i(R) \  R_i \ \phi_i(R_i',R_{-i}),$ and thus $\phi$ is \emph{strategy-proof}. \hfill $\square $

\subsection{Proof of Theorem \ref{caractvariosbienes}}\label{prueba teo 3} 

The following result is required for the proof of the theorem.

\begin{lemma}\label{lemma unanimity}
    Any \emph{Pareto-undominated strategy-proof} and \emph{multidimensional replacement monotonic} rule 
    satisfies \emph{unanimity}.
\end{lemma}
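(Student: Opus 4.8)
The plan is to argue by contradiction and exploit the defining feature of \emph{Pareto-undominated strategy-proofness}: if the rule fails \emph{unanimity}, I will exhibit a competing rule in $\Phi^\star$ that dominates it. So suppose $\varphi$ is not \emph{unanimous}; then there is a profile $R^0$ with $\sum_{j\in N}p^\ell(R^0_j)=\Omega^\ell$ for every $\ell\in L$ but $\varphi(R^0)\neq p(R^0)$. The point is that the peak allocation $p(R^0)$ is itself feasible (its coordinates sum to $\Omega$) and assigns each agent her globally best bundle, so by single-peakedness $p(R^0)$ Pareto-dominates $\varphi(R^0)$, strictly for any agent who is not already receiving her peak. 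Hence, to contradict \emph{Pareto-undominated strategy-proofness}, it suffices to construct some $\psi\in\Phi^\star$ with $\psi\succcurlyeq\varphi$ that returns (at least for the deviating agent) the peak at $R^0$, so that $\varphi\not\succcurlyeq\psi$.

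Before building $\psi$, I would record two reductions. First, since $\varphi$ is \emph{multidimensional replacement monotonic}, it is \emph{non-bossy} (Remark \ref{remnonbossy}); together with \emph{own-peak-onliness} this upgrades $\varphi$ to a \emph{peaks-only} rule, so the entire analysis can be phrased in terms of peak profiles. Second, by Lemma \ref{choiceyoption}(ii) every option set $O_i^{\varphi}(R_{-i})$ is a box $\prod_{\ell\in L}[a_i^\ell,b_i^\ell]$, on which $\varphi_i(R)$ is the coordinate-wise projection of $p(R_i)$; and by Lemma \ref{options}, the relation $\psi\succcurlyeq\varphi$ is equivalent to the pointwise box-inclusions $O_i^{\varphi}(R_{-i})\subseteq O_i^{\psi}(R_{-i})$. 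Thus constructing a dominating rule amounts to \emph{enlarging} these boxes, in a manner consistent with feasibility, until the deviating agent's box at $R^0$ contains her peak (which then forces $\psi$ to select it). The preference certifying the strict improvement that yields $\varphi\not\succcurlyeq\psi$ would be produced via Lemma \ref{pref}.

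The construction of $\psi$ is the delicate part. A tempting candidate is ``assign $p(R)$ whenever the peaks are feasible and keep $\varphi(R)$ otherwise,'' which is easily seen to be \emph{peaks-only} and to dominate $\varphi$. However, this naive rule is in general \emph{not} strategy-proof: at an infeasible-peak profile an agent can misreport the unique feasibility-completing peak $q_i=\Omega-\sum_{j\neq i}p(R_j)$ (when it lies in $X$) and thereby seize it, and $q_i$ may be strictly preferred to $\varphi_i(R)$. This is exactly why merely assigning peaks does not suffice — paralleling the fact that it is the uniform rule, not such a naive rule, that dominates degenerate rules like the constant rule. The redistribution performed off the feasible-peak set must therefore be chosen so that the enlarged boxes remain boxes and remain feasible at every profile simultaneously.

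Consequently, the main obstacle is verifying that the enlarged family of option sets is consistent with a genuine rule in $\Phi^\star$: feasibility at every profile and \emph{strategy-proofness} (equivalently, that no agent can profitably misreport into the modified region). This is precisely where I expect \emph{multidimensional replacement monotonicity} of $\varphi$ to do the heavy lifting, since it controls, coordinate by coordinate, how the boxes $O_i^{\varphi}(R_{-i})$ respond as the other agents' peaks vary, and this monotone structure is what permits the enlargement to be propagated from the feasible-peak profiles to neighbouring profiles without creating manipulation incentives. Once $\psi\in\Phi^\star$ with $\psi\succcurlyeq\varphi$ and a strict improvement at $R^0$ is in hand, Lemma \ref{pref} supplies the witnessing preference, $\varphi\not\succcurlyeq\psi$ follows, and this contradicts \emph{Pareto-undominated strategy-proofness}, establishing \emph{unanimity}.
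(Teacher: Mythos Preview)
Your setup and reductions are correct, but the proof is not completed: you never actually construct a dominating $\psi\in\Phi^\star$, and in fact you dismiss the approach that works. The paper modifies $\varphi$ at a \emph{single} peak profile: set $\psi(R)=p(\overline{R})$ when $p(R)=p(\overline{R})$ and $\psi(R)=\varphi(R)$ otherwise (here $\overline{R}$ is your $R^0$). The only nontrivial manipulation to rule out is an agent $i$ with $p(\widetilde{R}_{-i})=p(\overline{R}_{-i})$ but $p(\widetilde{R}_i)\neq p(\overline{R}_i)$ who reports peak $p(\overline{R}_i)$ to seize it. Strategy-proofness of $\varphi$, together with Lemma~\ref{pref}, forces $\varphi_i(\overline{R})$ to lie between $p(\overline{R}_i)$ and $\varphi_i(\widetilde{R}_i,\overline{R}_{-i})$, and $\varphi_i(\widetilde{R}_i,\overline{R}_{-i})$ to lie between $p(\widetilde{R}_i)$ and $\varphi_i(\overline{R})$; chaining these, $\varphi_i(\widetilde{R}_i,\overline{R}_{-i})$ is between $p(\widetilde{R}_i)$ and $p(\overline{R}_i)$, so by multidimensional single-peakedness the deviation is unprofitable. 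That betweenness chain is the entire strategy-proofness verification for $\psi$.

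Two misdiagnoses led you astray. First, your broader ``naive'' rule (assign peaks whenever feasible, else $\varphi$) is in fact strategy-proof by the very same betweenness argument --- your worry that an agent can profitably misreport the feasibility-completing peak $q_i=\Omega-\sum_{j\neq i}p(R_j)$ is answered by the chain above with $q_i$ in the role of $p(\overline{R}_i)$ --- so the rejection was premature. Second, \emph{multidimensional replacement monotonicity} does no ``heavy lifting'' in the construction of $\psi$: beyond the peaks-only upgrade you already identified (via non-bossiness), the definition of $\psi$ and its verification use only strategy-proofness of $\varphi$ and Lemma~\ref{pref}. The program of globally ``enlarging boxes'' using monotonicity that you sketch is unnecessary and is where your argument stalls.
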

\begin{proof}
    Let $\varphi$ be a \emph{Pareto-undominated strategy-proof} and \emph{multidimensional replacement monotonic} rule. Then, $\varphi$ is \emph{own-peak-only} and since it is \emph{non-bossy}  by Remark \ref{remnonbossy}, it is also \emph{peaks-only} (see definition in Footnote \ref{footnote peaks-only}).   
    Assume $\varphi$ is not \emph{unanimous}. Then, there is $\overline{R} \in \mathcal{R}^n$ with  $\sum_{i \in N}p^\ell(\overline{R}_i)=\Omega^\ell$ for each $\ell \in L$ such that, for some $i^\star \in N$ and some $\ell^\star \in L$, $\varphi_{i^\star}(\overline{R})\neq p^{\ell^\star}(\overline{R}_{i^\star})$. Notice that this implies that $\varphi(\overline{R})\neq p(\overline{R})$.
    Next, define rule $\psi$ as follows: for each $R \in \mathcal{R}^n$, $$\psi(R)=\begin{cases}
        p(\overline{R}) & \text{ if } p(R)=p(\overline{R})\\
        \varphi(R) & \text{otherwise}
    \end{cases}$$
    Clearly, $\psi$ is \emph{peaks-only} and thus \emph{own-peak-only}.
    \medskip
    
    \noindent \textbf{Claim: $\boldsymbol{\psi}$ is \emph{strategy-proof}.} Assume this is not the case. 
    Then, there is $\widetilde{R} \in \mathcal{R}^n$, $i \in N$, and $R_i' \in \mathcal{R}$ such that \begin{equation}
        \label{prueba unanime 1} \psi_i(R_i',\widetilde{R}_{-i}) \ \widetilde{P}_i \ \psi_i(\widetilde{R}).
    \end{equation}
    First, notice that $p(\widetilde{R})\neq p(\overline{R})$. Otherwise, $p(\widetilde{R}_i)=p(\overline{R}_i)=\psi_i(\widetilde{R})$, contradicting \eqref{prueba unanime 1}. Therefore, by the definition of $\psi$, $\psi(\widetilde{R})=\varphi(\widetilde{R})$. Furthermore, $p(R_i',\widetilde{R}_{-i})=p(\overline{R})$. Otherwise $\psi(R_i',\widetilde{R}_{{-i}})=\varphi(R_i',\widetilde{R}_{-i})$ and \eqref{prueba unanime 1} becomes $\varphi_i(R_i',\widetilde{R}_{-i}) \ \widetilde{P}_i \ \varphi_i(\widetilde{R})$, contradicting that $\varphi$ is \emph{strategy-proof}. 
    Thus, $p(R_i')=p(\overline{R}_i)$, $p(\widetilde{R}_{-i})=p(\overline{R}_{-i})$, and by \emph{own-peak-onliness} and the definition of $\psi$,   \eqref{prueba unanime 1} becomes 
    \begin{equation}
        \label{prueba unanime 2} p(\overline{R}_i) \ \widetilde{P}_i \ \varphi_i( \widetilde{R}_i, \overline{R}_{-i}).
    \end{equation}
    Next, notice that 
    \begin{equation}\label{between 1}
    \varphi_i(\overline{R}) \text{ is between } p(\overline{R}_i) \text{ and } \varphi_i(\widetilde{R}_i, \overline{R}_{-i}).     
    \end{equation}
    Otherwise, by Lemma \ref{pref}, there is $\widehat{R}_i \in \mathcal{R}$ with $p(\widehat{P}_i)=p(\overline{R}_i)$ such that $\varphi_i(\widetilde{R}_i, \overline{R}_{-i}) \ \widehat{P}_i \ \varphi_i(\overline{R})$ and, by \emph{own-peak-onliness},  $\varphi_i(\widetilde{R}_i, \overline{R}_{-i}) \ \widehat{P}_i \ \varphi_i(\widehat{R}_i,\overline{R}_i)$, contradicting \emph{strategy-proofness}. Similarly, it can be shown that 
    \begin{equation}\label{between 2}
    \varphi_i(\widetilde{R}_i, \overline{R}_{-i}) \text{ is between } p(\widetilde{R}_i) \text{ and } \varphi_i(\overline{R}).     
    \end{equation}
    By \eqref{between 1} and \eqref{between 2}, it is clear that $\varphi_i(\widetilde{R}_i, \overline{R}_{-i})$ is between $p(\widetilde{R}_i)$ and $p(\overline{R}_i)$. Therefore, by multidimensional single-peakedness, $\varphi_i(\widetilde{R}_i, \overline{R}_{-i}) \ \widetilde{P}_i \ p(\overline{R}_i)$. This contradicts \eqref{prueba unanime 2}. Hence, $\psi$ is \emph{strategy-proof}. This proves the claim.
    \medskip

    \noindent Therefore, $\psi \in \Phi^\star$. Clearly, $\psi \succcurlyeq \varphi$ and $\varphi \not \succcurlyeq \psi$, contradicting that $\varphi \in \Phi^\star$. We conclude then that $\varphi$ is \emph{unanimous}.   
\end{proof}

\vspace{15 pt}

\noindent
\emph{Proof of Theorem \ref{caractvariosbienes}}.
The multidimensional uniform rule is  \emph{strongly Pareto-undominated strategy-proof} by  Corollary  \ref{Ufeenm} and, therefore, \emph{Pareto-undominated strategy-proof.} Moreover, it satisfies   \emph{equal treatment.} We already mentioned that it is    \emph{multidimensional replacement monotonic.}  Let  $\varphi$ be a rule that satisfies the properties listed in Theorem \ref{caractvariosbienes}. By Lemma \ref{lemma unanimity}, $\varphi$ is \emph{unanimous}. Then, in particular,  $\varphi$ satisfies \emph{strategy-proofness,} \emph{unanimity,} \emph{equal treatment} and \emph{non-bossiness}. It follows, from Corollary 1 in  \cite{Morimoto13},  that  $\varphi=u.$ \hfill $\square $

\end{document}